\DeclareMathOperator{\conv}{conv}
\begin{document}
\title{Shortest Paths on Polymatroids and Hypergraphic Polytopes}
%
%
\author{Jean Cardinal\inst{1}\orcidID{0000-0002-2312-0967} \and
Raphael Steiner\thanks{Research of R.S. supported by the SNSF Ambizione Grant No. 216071.}\inst{2}\orcidID{0000-0002-4234-6136}}
\authorrunning{J. Cardinal and R. Steiner}
%
\institute{Universit\'e libre de Bruxelles (ULB), Brussels, Belgium\\
  \email{jean.cardinal@ulb.be}\and
  ETH Zurich, Z\"{u}rich, Switzerland\\
\email{raphaelmario.steiner@inf.ethz.ch}}
\maketitle              
\sloppy
\begin{abstract}
  Base polytopes of polymatroids, also known as generalized permutohedra, are polytopes whose edges are parallel to a vector of the form $\mathbf{e}_i - \mathbf{e}_j$, where the $\{\mathbf{e}_i\}_{i\in [n]}$ are the canonical basis vectors of $\mathbb{R}^n$.
  
  We consider the following computational problem: Given two vertices of a generalized permutohedron $P$, find a shortest path between them on the skeleton of $P$, where the length of a path is its number of edges. This captures many known flip distance problems, such as computing the minimum number of exchanges between two spanning trees of a graph, the rotation distance between binary search trees, the flip distance between acyclic orientations of a graph, or rectangulations of a square. 
  
  We prove that this general problem is \NP-hard, even when restricted to very simple polymatroids in $\mathbb{R}^n$ defined by $O(n)$ inequalities. 
  Assuming $\P\not= \NP$, this rules out the existence of a computationally efficient simplex pivoting rule that would perform a minimum number of nondegenerate pivoting steps to an optimal solution of a linear program, even when the latter defines a polymatroid.
  
  We also prove that the shortest path problem is inapproximable when the polymatroid is specified via an evaluation oracle for a corresponding submodular function, which strengthens a recent result by Ito, Kakimura, Kamiyama, Kobayashi, Maezawa,  Nozaki, and Okamoto (ICALP'23).
  More precisely, we prove the \APX-hardness of the shortest path problem when the polymatroid is a hypergraphic polytope, whose vertices are in bijection with acyclic orientations of a given hypergraph. The shortest path problem then amounts to computing the flip distance between two acyclic orientations of a hypergraph. 

  On the positive side, we provide a polynomial-time approximation algorithm for the problem of  computing the flip distance between two acyclic orientations of a hypergraph, where the approximation factor is the maximum codegree of the hypergraph. Our result implies in particular an exact polynomial-time algorithm for the flip distance between two acyclic orientations of any linear hypergraph.
\end{abstract}

\section{Introduction}

We consider the problem of finding a shortest path between two vertices of a graph formed by the vertices and edges of an $n$-dimensional polytope, where the length of the path is its number of edges. Bounding the length of such paths has been the topic of intensive research since the birth of the theory of linear programming and the invention by Dantzig, around 1947, of the Simplex algorithm, that seeks solutions of a linear program by iteratively moving from one feasible solution to another, along the edges of the polytope. The performance of this method, parameterized by a so-called \emph{pivoting rule}, is still the subject of many open questions. In particular, it is not known whether there exists a pivoting rule that makes the simplex algorithm run in strongly polynomial time. In a series of recent papers, the question of whether any efficient algorithm could either find or approximate the shortest path to a solution has been studied, and several negative results have been proved~\cite{ACHKMS21,DKS22,IKK0MNO23,IKKKO22}. In particular, shortest paths between perfect matchings of a bipartite graph on the perfect matching polytope were shown to be hard to approximate~\cite{CS23}. This implies that no simplex pivoting rule can perform an approximately optimal number of (nondegenerate) pivoting steps, even on linear programs defined by totally unimodular matrices.

Another motivation for studying shortest paths on polytopes is the problem of computing the rotation distance between two binary search trees, or, equivalently, the flip distance between two triangulations of a convex polygon. The equivalence can be proved by a classical Catalan bijection that maps every triangulation to its dual tree. This problem, too, can be cast as finding a shortest path between two vertices of a polytope. The associated polytope is called the \emph{associahedron}, and has been the topic of numerous investigations in various mathematical contexts, including data structures, topology, and algebra~\cite{STT88,L04,P14,CSZ15}. The complexity status of the rotation distance problem is another widely open question. It is not known to be \NP-hard, but no polynomial-time algorithm is known either. The only known results are a simple polynomial-time 2-approximation algorithm and several polynomial-time fixed-parameter algorithms~\cite{CS09,LX23}.

These two problems, the existence of efficient pivoting rules and computing the rotation distance between binary trees, are our main motivations for introducing the question of computing the distance between vertices of 
\emph{base polytopes of polymatroids}, also known in the combinatorics literature as \emph{generalized permutohedra}. In what follows, we will refer to them simply as \emph{polymatroids}. From the optimization point of view, these are polytopes that generalize matroid base polytopes, while preserving the property that the vertices which are extremal with respect to a linear function can be found easily using a greedy algorithm~\cite{S03,F05}. Hence just as perfect matching polytopes, polymatroids yield another class of linear programs that are solvable in strongly polynomial time. While polymatroids have been introduced and studied by Edmonds as early as 1970~\cite{E03}, they play a key role in more recent developments in algebraic combinatorics, where they are defined as polytopes whose normal fan is a coarsening of the normal fan of the permutohedron~\cite{PRW08,P09}. They are fundamental objects, among others, in the study of combinatorial Hopf algebras~\cite{AA17,LR12,M19}. Associahedra are classical examples of polymatroids, hence the rotation distance between binary search trees is a special case of a shortest path problem on a polymatroid.

 Recently, Ito, Kakimura, Kamiyama, Kobayashi, Maezawa,  Nozaki, and Okamoto~\cite{IKK0MNO23} showed that unless $\P=\NP$, there is no polynomial-time algorithm for finding a shortest path between two vertices of a polymatroid, when it is given in the form of an oracle to its associated submodular rank function, a standard computation model for optimization on matroids and polymatroids. More precisely, they prove the hardness of computing the rotation distance between elimination trees on graphs, a problem that generalizes the rotation distance between binary search trees, and amounts to finding shortest paths on a class of polymatroids known as \emph{graph associahedra}~\cite{CD06,MP14}.
 
\subsection{Our results}

We strengthen the result by Ito et al.~\cite{IKK0MNO23} in two ways.

We first show that there exist simply-structured polymatroids in $\mathbb{R}^n$ described by a set of $2n+2$ inequalities, on which the shortest path problem is strongly \NP-hard. 
We show that the shortest path problem on those polytopes is equivalent to the known \emph{pure constant fixed charge transportation} problem, a special case of the classical fixed charge transportation problem.

We then consider the shortest path problem in the setting where the polymatroid is specified by an oracle to a submodular function. Strengthening the aforementioned hardness result by Ito et al., we show that in this computational model, the problem of finding a shortest path between two vertices of a polymatroid cannot even be efficiently approximated, unless $\P=\NP$.
To prove this result, we consider a special family of polymatroids defined by acyclic orientations of hypergraphs. These polytopes are known as \emph{hypergraphic polytopes}~\cite{BBM19,CHMMM23}. They form a natural generalization of polymatroids defined by acyclic orientations of graphs, and have recently been extensively studied both from an algebraic and combinatorial viewpoint~\cite{AA17,R22}. They also include the special cases of associahedra and graph associahedra mentioned above.

Finally, we provide a positive result about acyclic orientations of hypergraphs. We show that the shortest path problem on hypergraphic polytopes can be approximated within a factor equal to the maximum codegree of the hypergraph. This provides a new polynomial-time exact algorithm for the shortest path problem on hypergraphic polytopes corresponding to linear hypergraphs, i.e. hypergraphs in which any two hyperedges intersect in at most one vertex.

Section~\ref{sec:bg} sets the stage for the problems we tackle and details a number of related computational questions. In Section~\ref{sec:hard} we prove the hardness of the shortest path problem on a polymatroid defined by linear inequalities. Section~\ref{sec:ao} describes the necessary definitions and background related to acyclic orientations of hypergraphs and the corresponding flip graphs. In Sections~\ref{sec:hard2} and \ref{sec:approx}, we then present the proofs of the inapproximability result and the codegree approximation algorithm, respectively.

\subsection{Other related works}

Our contribution is also relevant in the context of \emph{combinatorial reconfiguration} problems~\cite{IDHPSUU11,H13,GIKO22}\footnote{See also \url{https://reconf.wikidot.com/}.}. In such problems, we are given two combinatorial objects together with local reconfiguration operations, and we are asked to decide whether it is possible to transform the first object into the other by iteratively applying a reconfiguration operation, and if it is, to find the minimum number of required steps. This problem is typically posed for structures defined as solutions of optimization problems, such as graph colorings or dominating sets~\cite{MN19}, independent sets~\cite{IKOSUY20}, cliques~\cite{IOO23}, bounded-degree spanning trees~\cite{BIKMOSW23}, $k$-edge-connected orientations~\cite{IIKKKMNOO23}, among others. Our results can be seen as contributions in this vein, although in the cases we consider, the question of whether any two objects are connected is trivial, since the the polytopal structure ensures connectivity (and in fact, $d$-connectivity, where $d$ is the dimension of the polytope, from Balinski's Theorem~\cite{B61}). 

\section{Definitions and background}
\label{sec:bg}

In what follows, we use the notation $\mathbf{p}=(p_1,p_2,\ldots ,p_n)$ for $n$-dimensional vectors, and let $[n]:=\{1,2,\ldots ,n\}$. 
We first consider the problem of finding a shortest path between two vertices of a polytope specified by linear inequalities.

\begin{problem}[Shortest paths on a polytope]
  \label{pb:sp}
 Given a polytope $P$ in $\mathbb{R}^n$, described by a system of linear inequalities, and two vertices $\mathbf{a}$ and $\mathbf{b}$ of $P$, each identified by $n$ tight inequalities, find a shortest path from $\mathbf{a}$ to $\mathbf{b}$ on the skeleton of $P$. Here the length of a path is defined as the number of edges.
\end{problem}

This problem is sometimes referred to as \emph{routing} on a polytope.
It is closely related to the complexity of the Simplex method and the Hirsch conjecture. In its polynomial version, the latter states that the diameter of $n$-polytopes with $m$ facets is bounded by a polynomial in $n$ and $m$. If the polynomial Hirsch conjecture holds, then Problem~\ref{pb:sp} lies in \NP.
It is known that Problem~\ref{pb:sp} is \NP-hard to approximate within any constant factor, even when $P$ is restricted to be the perfect matching polytope of a bipartite graph~\cite{ACHKMS21,CS23}. We are interested in proving similar results in the case where the polytope is a polymatroid.

\subsection{Polymatroids and generalized permutohedra}

\emph{Base polytopes of polymatroids} are also known as \emph{generalized permutohedra}. In what follows, with a slight abuse of terminology, we will simply refer to them as \emph{polymatroids}.
They can be characterized in different ways~\cite{S03,F05,DF10,ACEP20}, generalizing the characterizations of matroid base polytopes by their edge lengths and the matroid rank function~\cite{GGMS87,S03}.

Recall that a set function $f:2^{[n]}\to\mathbb{R}$ is called \emph{submodular} if the inequality $f(T)+f(U)\geq f(T\cap U)+f(T\cup U)$ holds for all pairs $U,T$ of subsets of $[n]$. We let $\mathbf{e}_i$ denote the $i$th canonical basis vector of $\mathbb{R}^n$.

\begin{theorem}[Characterizations of polymatroids]
  \label{thm:gp}
  Given a polytope $P\subset\mathbb{R}^n$, the following statements are equivalent:
  \begin{enumerate}
  \item Every edge of $P$ is parallel to $\mathbf{e}_j-\mathbf{e}_i$ for a pair $i,j\in [n]$.
  \item There exists a submodular function $f:2^{[n]}\to \mathbb{R}$ such that
    \[
    P = P_f := \left\{\mathbf{x}\in\mathbb{R}^n : \sum_{i\in U} x_i\leq f(U)\ \mbox{ for all } U\subset [n],\mbox{ and } \sum_{i\in [n]} x_i = f([n])\right\}.
    \]
    \end{enumerate}
\end{theorem}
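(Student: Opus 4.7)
The plan is to prove the two directions separately, relying on classical ideas from polymatroid theory.

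For the direction (2) $\Rightarrow$ (1), I would apply Edmonds' greedy theorem. For each linear ordering $\pi=(i_1,\ldots,i_n)$ of $[n]$, set $v_\pi(i_k) := f(\{i_1,\ldots,i_k\}) - f(\{i_1,\ldots,i_{k-1}\})$. The greedy algorithm shows that these are precisely the vertices of $P_f$, and that $v_\pi$ maximizes any linear functional $c$ whose entries are ordered non-increasingly by $\pi$. Two vertices $v_\pi$ and $v_{\pi'}$ are joined by an edge of $P_f$ whenever $\pi'$ differs from $\pi$ by a swap of two adjacent entries $i_k$ and $i_{k+1}$; a direct computation from the definition of $v_\pi$ shows that $v_{\pi'} - v_\pi$ is then a scalar multiple of $\mathbf{e}_{i_{k+1}} - \mathbf{e}_{i_k}$.

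For the converse (1) $\Rightarrow$ (2), I would define $f(U) := \max_{x \in P} \sum_{i\in U} x_i$ for $U \subseteq [n]$ and $f(\emptyset) := 0$. Because every edge of $P$ is parallel to some $\mathbf{e}_j - \mathbf{e}_i$, adjacent vertices agree in coordinate sum, and by connectivity of the $1$-skeleton all of $P$ lies in the hyperplane $\sum_i x_i = f([n])$. The edge-direction hypothesis also implies that the normal fan of $P$ coarsens the braid arrangement: each edge's normal cone is contained in some braid hyperplane $\{c : c_i = c_j\}$, so each maximal cone of the normal fan of $P$ is a union of chambers $C_\pi := \{c : c_{\pi(1)}\geq \ldots \geq c_{\pi(n)}\}$. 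Consequently, for every permutation $\pi$ there is a well-defined vertex $v_\pi$ of $P$ that maximizes every linear functional in the interior of $C_\pi$, and in particular $f(S) = \chi_S \cdot v_\pi$ whenever $S$ is an initial segment of $\pi$.

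The heart of the argument is the submodularity of $f$. Given $T, U \subseteq [n]$, I would select two orderings $\pi$ and $\sigma$ of the block form $(T \cap U,\, \cdot,\, \cdot,\, [n]\setminus(T \cup U))$, where $\pi$ puts $T\setminus U$ before $U \setminus T$ and $\sigma$ does the opposite. Then $T$, $T\cap U$, and $T\cup U$ are initial segments of $\pi$, while $U$, $T\cap U$, and $T\cup U$ are initial segments of $\sigma$, so that
\begin{equation*}
f(T) + f(U) - f(T\cap U) - f(T\cup U) = \chi_{U\setminus T}\cdot (v_\sigma - v_\pi).
\end{equation*}
This quantity is nonnegative because both $v_\pi$ and $v_\sigma$ lie on the face of $P$ maximizing $\chi_{T\cap U}$, but on that face $v_\sigma$ further maximizes $\chi_{U\setminus T}$ (as it places $U\setminus T$ immediately after $T\cap U$), whereas $v_\pi$ does not. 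Finally, the equality $P = P_f$ follows by comparing support functions: for any linear functional $c$ whose entries are ordered by some $\pi$, decomposing $c$ as a nonnegative combination of indicators of initial segments of $\pi$ yields $\max_{x\in P} c\cdot x = \max_{x\in P_f} c\cdot x$, so the two polytopes have identical support functions and therefore coincide.

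The main obstacle will be the submodularity step, together with the underlying structural observation that the edge-direction hypothesis forces the normal fan of $P$ to coarsen the braid arrangement. This coarsening is what makes the assignment $\pi\mapsto v_\pi$ well-defined and enables the initial-segment bookkeeping on which both the submodularity and support-function arguments rely.
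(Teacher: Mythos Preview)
The paper does not prove this theorem at all: immediately after the statement it writes ``The reader is referred to \cite{S03,F05,P09,AA17} for details on this classical statement.'' So there is no proof in the paper to compare your proposal against; the theorem is invoked as background.

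Your sketch is nonetheless a correct outline of a standard proof. The $(2)\Rightarrow(1)$ direction via the greedy vertices $v_\pi$ and adjacent transpositions is the usual argument (with the caveat that one should also check that \emph{every} edge arises from such a transposition, not merely that transpositions give edges). For $(1)\Rightarrow(2)$, your key observations are all sound: the edge-direction hypothesis forces the normal fan to coarsen the braid fan, the support function $f(U)=\max_{x\in P}\chi_U\cdot x$ is the right candidate, and your block-ordering computation
\[
f(T)+f(U)-f(T\cap U)-f(T\cup U)=\chi_{U\setminus T}\cdot(v_\sigma-v_\pi)\ge 0
\]
is a clean way to obtain submodularity (the nonnegativity follows simply from $\chi_U\cdot v_\pi\le f(U)=\chi_U\cdot v_\sigma$ combined with $\chi_{T\cap U}\cdot v_\pi=\chi_{T\cap U}\cdot v_\sigma$). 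The final support-function comparison, decomposing a sorted $c$ as a nonnegative combination of indicators of initial segments, correctly yields $P=P_f$. This is essentially the argument one finds in the references the paper cites (notably Postnikov and Aguiar--Ardila), so your plan is appropriate even though the paper itself omits the proof.
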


The reader is referred to \cite{S03,F05,P09,AA17} for details on this classical statement. 
If we further ask the function $f$ to be nondecreasing and satisfy $f(\emptyset)=0$ and $f(U)\leq |U|$, it is called a \emph{rank function}, but we will not require these conditions here.
We focus on the case where $f$ only takes on integral values. In this case the vertices of $P$ have integer coordinates, hence we focus on \emph{integer polymatroids}.

\subsection{Shortest paths on polymatroids}
\label{sec:cases}

We consider a variant of Problem~\ref{pb:sp} in which the input polytope $P$ is encoded by a submodular function $f:2^{[n]}\to \mathbb{R}$. 

\begin{problem}[Shortest paths on a polymatroid]
  \label{pb:spgp}
Given an oracle to a submodular function $f$, and two vertices $\mathbf{a}$ and $\mathbf{b}$ of the polytope $P_f$, compute the length of a shortest path from $\mathbf{a}$ to $\mathbf{b}$ on the skeleton of $P_f$.
\end{problem}
Several interesting computational problems can be reduced to Problem~\ref{pb:spgp}.

\subsubsection{Matroids.}

Clearly, when $f$ is the rank function of a \emph{matroid} $M$ on ground-set $[n]$, then Problem~\ref{pb:spgp} consists of computing the base exchange distance between two bases $A$ and $B$ of $M$, which is simply $|A\setminus B|$ (or, symmetrically, $|B\setminus A|$). In fact, a shortest path between two vertices $\mathbf{x},\mathbf{y}$ of $P_f$ in this case can be found greedily using polynomially many oracle calls to the rank function $f$: We start by reading off the corresponding bases $A$ and $B$ as the supports of $\mathbf{x}$ and $\mathbf{y}$. We then greedily construct a sequence of bases $A=B_0,B_1,\ldots,B_t=B$ inducing a shortest path from $\mathbf{x}$ to $\mathbf{y}$ on $P_f$ as follows. For each $k \ge 0$ such that $B\setminus B_k\neq \emptyset$, we search through all potential element-pairs $(i,j) \in (B\setminus B_k)\times (B_k\setminus B)$ and test whether $(B_k\cup \{i\})\setminus \{j\}$ is a basis (by using the oracle to decide if $f((B_k\cup \{i\})\setminus \{j\})=f([n])$). Such a pair always exists by the basis exchange axiom for matroids. Once a suitable pair $(i,j)$ is found, we define $B_{k+1}:=(B_k\cup \{i\})\setminus \{j\}$ and proceed. Thus, Problem~\ref{pb:spgp} is efficiently solvable in the case of matroids. 

\subsubsection{Acyclic orientations of graphs and graphical zonotopes.}

Given a simple, connected graph $G=([n],E)$, consider the submodular function $f:2^{[n]} \rightarrow \mathbb{N}_0$, where for a vertex-subset $U \subseteq [n]$ the value of $f(U)$ is simply defined as the number of edges that have at least one endpoint in $U$. 
The polymatroid $P_f$ defined by this choice of $f$ is known as the \emph{graphical zonotope} of $G$~\cite{S73,G77,GZ83}. The vertices of $P_f$ are one-to-one with the acyclic orientations of $G$, and its edges are one-to-one with pairs of acyclic orientations that differ by a single edge flip. Given an acyclic orientation $A$ of $G$, the entries of its corresponding vertex of $P_f$ simply correspond to the in-degrees of the vertices in $A$.

In a way that is similar to the previous case, the flip distance problem for graphical zonotopes can be shown to be solvable easily: The flip distance between two acyclic orientations $A$ and $B$ of $G$ is exactly equal to the number of edges that are oriented differently in $A$ and $B$, and a sequence of $|A \Delta B|$ edge-reversals that transforms $A$ into $B$ while keeping all intermediate orientations acyclic can be efficiently computed (see e.g.~\cite{fukuda}). We claim that in fact, a shortest path between two given vertices $\mathbf{x},\mathbf{y}$ of $P_f$ can be found efficiently even when one is only given oracle access to the function $f$. To see this, note that given any pair $u,v \in [n]$, we can determine whether $u$ and $v$ are adjacent in $G$ by computing the value of $f(u)+f(v)-f(\{u,v\})$, which is $1$ if $u$ and $v$ are adjacent and $0$ otherwise. Thus, using polynomially oracle calls to $f$ we can determine $G$. Let $A$ and $B$ be the acyclic orientations of $G$ whose in-degree sequences correspond to $\mathbf{x}$ and $\mathbf{y}$. By repeatedly identifying a vertex of in-degree $0$ and removing it, we can efficiently reconstruct the orientations $A$ and $B$ of $G$ from $\mathbf{x}$ and $\mathbf{y}$. Finally, knowing $A, B$ and $G$ we can efficiently find a shortest flip-sequence of edge-reversals that transforms $A$ into $B$ (see~\cite{fukuda}) and map this to a shortest path from $\mathbf{x}$ to $\mathbf{y}$ on the skeleton of $P_f$. 

We refer the reader to Pilaud~\cite{P22}, and Cardinal, Hoang, Merino, and M\"utze~\cite{CHMMM23} for recent developments on the combinatorics of acyclic orientations of graphs. One of the contributions of the present paper is the analysis of a hypergraph generalization of this problem.

\subsubsection{Triangulations and associahedra.} 

When the submodular function is defined as 
\begin{equation}
\label{f:assoc}
    f(U) = |\{ (i,j)\in [n]^2 : i<j \mbox{ and } U\cap [i,j]\not=\emptyset \}|,
\end{equation}
then the polymatroids are Loday's associahedra~\cite{STT88,P14,L04,CSZ15,PSZ23}.
Problem~\ref{pb:spgp} in this special case is the problem of computing the flip distance between two triangulations of a convex polygon, where a flip consists of replacing the diagonal of a quadrilateral formed by two adjacent triangles by its other diagonal. Equivalently, this amounts to computing the rotation distance between the dual binary trees.
The more general problems of computing flip distances between triangulations of arbitrary point sets or polygons are known to be \NP-hard~\cite{LP15,AMP15}. \FPT\ algorithms have been designed~\cite{KSX17,CS09}.

\subsubsection{Elimination trees and graph associahedra.} 

An \emph{elimination tree} of a graph $G$ is obtained by selecting any vertex $v$ of $G$ as the root, and attaching elimination trees obtained by recursing on each of the connected component of $G-v$. A natural rotation operation connects some pairs of elimination trees, and the rotation graph on the elimination trees is the skeleton of the so-called \emph{graph associahedron} of $G$.

The graph associahedron of $G$ can be specified by a submodular function $f$ defined as follows. Let $V$ be the vertex set of $G$, and define $\mathcal{B}(G)\subset 2^V$ as the collection of nonempty subsets $S\subseteq V$ such that $G[S]$ is connected. The collection $\mathcal{B}(G)$ is known as the \emph{graphical building set} of $G$. The submodular function defining the associahedron of $G$ is then defined as
\begin{equation}
\label{f:grassoc}
    f(U) = |\{ S\in\mathcal{B}(G) : U\cap S\not=\emptyset \}|.
\end{equation}

Graph associahedra were defined by Carr and Devadoss~\cite{CD06} and Postnikov~\cite{P09}, and also considered by Postnikov, Rainer, and Williams~\cite{PRW08}.
The structure of their skeleton was the subject of several recent works~\cite{MP14,CMM22,CPV22}. The flip distance problem in this case is known to be \NP-hard from the recent result of Ito et al.~\cite{IKK0MNO23}.

\subsubsection{Partial permutations and stellohedra.} 

Stellohedra are associahedra of stars~\cite{MP14,CP16}, and their vertices, the elimination trees of a star, can be shown to be one-to-one with \emph{partial permutations} of the set of leaves~\cite{CMM22}. The rotations can be interpreted in that case as either adjacent transpositions or adding or removing an element in final position. The corresponding flip distance problem was recently shown to be solvable in polynomial time by Cardinal, Pournin, and Valencia-Pabon~\cite{CPV23}.

\subsubsection{Rectangulations and quotientopes.} 

A \emph{rectangulation} is a tesselation of the unit square by disjoint axis-aligned rectangles. Modulo natural equivalence classes, rectangulations can be shown to be in bijection with classes of pattern-avoiding permutations~\cite{MM23}. Similarly to triangulations, flip graphs on rectangulations can be defined and shown to be skeletons of polytopes~\cite{CSS18,M19}.
These polytopes have been studied in~\cite{LR12,R12} and belong to the wider family of \emph{quotientopes}~\cite{PS19,PPR21}. They are also polymatroids, and therefore the flip distance problems between rectangulations are also special cases of Problem~\ref{pb:spgp}. The complexity of these problems is open.

\medskip
Note that graphical zonotopes, associahedra, graph associahedra, and stellohedra are all special cases of \emph{hypergraphic polytopes}, that we consider extensively in Section~\ref{sec:ao}. 

\section{\NP-hardness of shortest paths on polymatroids defined by linear inequalities}
\label{sec:hard}

We prove that Problem~\ref{pb:sp} is \NP-hard even in the case where the polytope $P$ is a polymatroid in the sense of Theorem~\ref{thm:gp}. In fact, the polytope can be restricted to be a very special polymatroid.

\begin{theorem}
  \label{thm:main}
 Problem~\ref{pb:sp} is strongly \NP-hard even when the input polytope is a polymatroid defined as the intersection of an axis-aligned box with a hyperplane.
\end{theorem}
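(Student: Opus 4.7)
The approach I would take is to reduce from the \emph{pure constant fixed charge transportation} (PCFCT) problem, which is strongly \NP-hard: given integer supplies $s_1,\dots,s_m$ at sources and demands $t_1,\dots,t_k$ at sinks with $\sum_i s_i = \sum_j t_j$, find a nonnegative transportation plan $(f_{ij})$ satisfying $\sum_j f_{ij} = s_i$ and $\sum_i f_{ij} = t_j$ that minimizes the number of \emph{active} routes, i.e.\ pairs $(i,j)$ with $f_{ij}>0$.

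From such an instance I would build the polytope
\[
P := \Bigl\{(x,y)\in\mathbb{R}^m\times\mathbb{R}^k \,:\, 0\le x_i\le s_i,\ 0\le y_j\le t_j,\ \textstyle\sum_i x_i + \sum_j y_j = \sum_i s_i\Bigr\},
\]
described by $2(m+k)+2$ linear inequalities. Being the intersection of an axis-aligned box with a hyperplane, any two adjacent vertices of $P$ differ in exactly two coordinates, so every edge is parallel to some $\mathbf{e}_p-\mathbf{e}_q$; hence by Theorem~\ref{thm:gp}, $P$ is a polymatroid. I then take $\mathbf{a}:=(s_1,\dots,s_m,0,\dots,0)$ and $\mathbf{b}:=(0,\dots,0,t_1,\dots,t_k)$, which are vertices of $P$, and claim that the $\mathbf{a}$-$\mathbf{b}$ distance in the skeleton of $P$ equals the PCFCT optimum.

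One direction is straightforward: given a feasible plan with $L$ active routes, performing the $L$ transfers in any order gives a walk of length $L$ in which each step changes exactly two coordinates (a source decreasing, a sink increasing) and therefore traverses an edge; box constraints are preserved because source coordinates only decrease from $s_i$ toward $0$ and sink coordinates only increase from $0$ toward $t_j$. For the converse, I would record each flip in a given flip sequence of length $L$ as a directed arc on $[n]$ labelled by the transferred amount; this defines an integer flow on a directed multigraph with $L$ arcs whose excess at vertex $v$ equals $b_v-a_v$, namely $+t_j$ at sinks, $-s_i$ at sources, and $0$ elsewhere. A standard flow decomposition then expresses the flow as the sum of source-to-sink paths and internal cycles using at most $L$ arcs; discarding the cycles and aggregating paths by their source-sink endpoints produces a feasible transportation plan with at most $L$ active routes.

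The main technical obstacle is making this last step rigorous: intermediate vertices of a flip path need not be supported on the ``pure'' coordinate sets used at $\mathbf{a}$ and $\mathbf{b}$, so a flip can legally shuttle mass through auxiliary (neither source nor sink) coordinates, and I must check that the flow-decomposition argument still yields an integer plan of value at most $L$ after discarding cycles. Once this equivalence is established, the strong \NP-hardness of Problem~\ref{pb:sp} on $P$ follows, since the numerical data $s_i,t_j$ in the PCFCT instance are polynomially bounded.
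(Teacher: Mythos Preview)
Your reduction from PCFCT and the definition of $P$, $\mathbf{a}$, $\mathbf{b}$ coincide exactly with the paper's construction, and your ``path $\Rightarrow$ plan'' direction via flow decomposition is a valid (and arguably cleaner) alternative to the paper's argument, which instead builds an undirected graph on $S\cup T$ from the flips, reads off a same-sum partition from its connected components, and invokes the Hultberg--Cardoso characterisation of the PCFCT optimum as $|S|+|T|-m$. Incidentally, your stated ``main technical obstacle'' is a phantom: in your own construction every coordinate is a source or a sink, so there are no auxiliary coordinates; the genuine subtlety is that a flip path may contain source--source or sink--sink flips, and your flow decomposition handles exactly that.

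The actual gap is in the direction you call ``straightforward''. It is \emph{not} true that performing the $L$ transfers of a feasible plan in any order traces a path on the skeleton of $P$. Changing exactly two coordinates is necessary but not sufficient for an edge: by the edge description (your own observation, and Lemma~\ref{lem:flips} in the paper), an edge step must move by the full amount $\delta=\min\{u_j-x_j,\,x_i-l_i\}$, and all other coordinates must sit at their box bounds. Take $s=(2,2)$, $t=(1,3)$ and the optimal plan $f_{11}=1$, $f_{12}=1$, $f_{22}=2$. Performing $(1,2)$ first moves $(2,2,0,0)$ to $(1,2,0,1)$, which has \emph{two} coordinates strictly inside their intervals and is not a vertex of $P$; equivalently, the required edge length here is $\delta=\min\{3,2\}=2\neq 1=f_{12}$. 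The paper resolves this by first passing through the same-sum partition $\{(S_\ell,T_\ell)\}_{\ell=1}^m$ associated with an optimal plan and then giving an explicit greedy algorithm that, within each block, always flips a pair containing the current ``active'' coordinate and moves by the full $\delta$, maintaining the invariant that at most one coordinate is off its bound. This produces a genuine skeleton path of length $|S|+|T|-m$, matching the PCFCT optimum. You need an argument of this kind (or an equivalent leaf-stripping order on the forest supporting an optimal plan) to close the gap.
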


We now define precisely the family of polytopes we consider.
Let $\mathbf{a},\mathbf{b}\in\mathbb{N}^n$ such that $\sum_{i=1}^n a_i = \sum_{i=1}^n b_i = N$ for some $N\in\mathbb{N}$. In what follows, we let the vectors $\mathbf{l}$ and $\mathbf{u}$ be defined as $l_i := \min\{a_i,b_i\}$ and $u_i:=\max\{a_i,b_i\}$, respectively.
We consider the polytope
\[
P := \left\{\mathbf{x}\in\mathbb{R}^n : \mathbf{l} \leq \mathbf{x} \leq \mathbf{u}  \mbox{ and }  \sum_{i=1}^n x_i=N\right\} .
\]

\begin{problem}[Shortest path on a box]
  \label{pb:spbox}
  Given two vectors $\mathbf{a},\mathbf{b}\in\mathbb{N}^n$ such that $\sum_{i=1}^n a_i = \sum_{i=1}^n b_i = N$ for some $N\in\mathbb{N}$, and $P$ defined as above,
  find the shortest path between $\mathbf{a}$ and $\mathbf{b}$ on the skeleton of $P$.
  \end{problem}

\begin{lemma}
  \label{lem:hard}
  Problem~\ref{pb:spbox} is strongly \NP-hard.
\end{lemma}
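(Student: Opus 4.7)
The plan is to prove strong \NP-hardness by reduction from $3$-Partition. Given positive integers $q_1,\dots,q_{3m}$ with $\sum_i q_i = mB$ and $B/4 < q_i < B/2$, I would set $n := 4m$ and define $a_i := q_i$, $b_i := 0$ for $i \in [3m]$ (the \emph{supply} coordinates, grouped in $S := [3m]$) and $a_{3m+j} := 0$, $b_{3m+j} := B$ for $j \in [m]$ (the \emph{demand} coordinates, grouped in $D := \{3m+1,\dots,4m\}$); then $P$ becomes the intersection of the box $\prod_{i \in S}[0,q_i]\times [0,B]^m$ with the hyperplane $\sum_i x_i = mB$. I would then argue that a shortest $\mathbf{a}\to\mathbf{b}$ path on the skeleton of $P$ has length exactly $3m$ if and only if the $3$-Partition instance admits a solution, which gives strong \NP-hardness of Problem~\ref{pb:spbox} because $3$-Partition is strongly \NP-hard and the reduction is strongly polynomial.

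For the lower bound on the path length, I would translate any $\mathbf{a}\to\mathbf{b}$ path of length $L$ with edges $\lambda_t(\mathbf{e}_{j_t}-\mathbf{e}_{i_t})$ into an aggregated nonnegative flow $g_{ij} := \sum_{t : (i_t,j_t)=(i,j)} \lambda_t$ on $[n]\times[n]$ with divergence vector $\mathbf{a}-\mathbf{b}$ and support size at most $L$. A standard flow-decomposition writes $g$ as a sum of at most $|\operatorname{supp}(g)|$ directed paths from $S$ to $D$ plus cycles; short-cutting each such path into a single direct arc while discarding the cycles yields a nonnegative \emph{transportation} $f:S\times D\to\mathbb{R}_{\geq 0}$ with $\sum_{j\in D} f_{ij} = q_i$, $\sum_{i\in S} f_{ij} = B$, and $|\operatorname{supp}(f)|\leq L$. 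Since each of the $3m$ supplies must contribute at least one nonzero arc, $|\operatorname{supp}(f)|\geq 3m$, and when equality holds each $q_i$ is entirely shipped to a single bin, so the size constraint $B/4<q_i<B/2$ forces exactly three items per bin, giving a valid $3$-partition.

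For the upper bound on the path length when a $3$-partition exists, I would construct a length-$3m$ path by processing the bins one at a time: for each bin $j\in D$ containing supplies $\{i_1,i_2,i_3\}$, I take three consecutive edges in direction $\mathbf{e}_j - \mathbf{e}_{i_k}$ of length $q_{i_k}$, successively emptying $x_{i_1}, x_{i_2}, x_{i_3}$ to zero and growing $x_j$ from $0$ up to $B$. Each intermediate point has exactly one non-tight box coordinate, namely the currently-filling $x_j$, so it is a vertex of $P$; consecutive intermediate points differ only in a pair of coordinates, in a direction parallel to $\mathbf{e}_j-\mathbf{e}_{i_k}$, so they are adjacent on the skeleton. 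This completes the reduction. The main obstacle I expect is making rigorous the lower-bound argument, especially justifying that the flow decomposition and shortcutting produce a transportation with $|\operatorname{supp}(f)|\leq|\operatorname{supp}(g)|$, which requires a careful count of how cycles and paths contribute to the arcs used.
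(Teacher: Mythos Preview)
Your argument is correct. The instance you build is in fact identical to the one the paper constructs: the paper reduces from the Pure Constant Fixed Charge Transportation (PCFCT) problem, and the Hultberg--Cardoso reduction from $3$-Partition to PCFCT, composed with the paper's reduction from PCFCT to Problem~\ref{pb:spbox}, yields exactly your $(\mathbf{a},\mathbf{b})$. Where you diverge is in the lower-bound argument. The paper proves the stronger statement that the shortest-path length equals the PCFCT optimum for \emph{every} instance, using a same-sum partition argument: it builds an undirected graph on $S\cup T$ whose edges record the flips, observes that each connected component induces a same-sum block, and counts at least $|S_\ell|+|T_\ell|-1$ flips per component. You instead aggregate the flips into a nonnegative flow $g$ with $|\operatorname{supp}(g)|\le L$, decompose it into at most $|\operatorname{supp}(g)|$ source--sink paths and cycles, and shortcut to a transportation $f$ with $|\operatorname{supp}(f)|\le L$. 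Your worry about the bound $|\operatorname{supp}(f)|\le|\operatorname{supp}(g)|$ is unfounded: the standard greedy flow decomposition kills at least one arc per path or cycle extracted, so the number of paths alone is at most $|\operatorname{supp}(g)|$, and shortcutting can only merge paths. Your route is more self-contained (no citation to PCFCT hardness needed), while the paper's route yields an exact structural equivalence with PCFCT that is of independent interest.
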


We characterize the edges of $P$ exactly and label them with pairs $(i,j)\in [n]^2$.

\begin{lemma}
  \label{lem:flips}
  Two distinct vertices $\mathbf{x}$ and $\mathbf{y}$ of $P$ are adjacent on the skeleton of $P$ if and only if there exists two indices $i,j\in [n]$ such that:
  \begin{itemize}
  \item $y_k = x_k \in \{l_k,u_k\}$ for all $k \in [n]$ with $k\not= i,j$,
  \item $y_j = x_j + \delta_{ij}$,
  \item $y_i = x_i - \delta_{ij}$,
  \end{itemize}
  where $\delta_{ij} = \min\{u_j - x_j, x_i - l_i\}$.
  In that case, we say that the oriented edge $(\mathbf{x}, \mathbf{y})$ is an $(i,j)$-\emph{flip}.
\end{lemma}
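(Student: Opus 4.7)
The plan is to exploit the simple facial structure of $P$: since $P$ is the intersection of the hyperplane $H = \{\mathbf{z} : \sum z_k = N\}$ with an axis-aligned box defined by $2n$ inequalities, every face of $P$ is cut out from $P$ by making a subset of those box inequalities tight, and the codimension of the face inside $H$ equals the number of independent tight inequalities (note the constraints $z_k = l_k$ and $z_k = u_k$ are incompatible along any non-empty face, so each coordinate contributes at most one tight constraint and the resulting equations involve disjoint variables, hence are independent). In what follows I assume $l_k < u_k$ for all $k$; coordinates with $l_k = u_k$ can be eliminated at the outset.

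For the backward implication, assuming the three listed conditions hold I would introduce
\[
F := \left\{\mathbf{z} \in P : z_k = x_k \text{ for all } k \in [n] \setminus \{i,j\}\right\}.
\]
Since $x_k \in \{l_k, u_k\}$ for each $k \ne i, j$, this set is cut out from $P$ by $n-2$ tight box inequalities, so $F$ is a face of $P$ of dimension at most $(n-1) - (n-2) = 1$. Both $\mathbf{x}$ and $\mathbf{y}$ lie in $F$: trivially for $\mathbf{x}$, and by the first condition for $\mathbf{y}$. Because $\delta_{ij}>0$ forces $\mathbf{y}\neq \mathbf{x}$, the face $F$ is in fact $1$-dimensional, i.e.\ an edge, and must have $\mathbf{x}$ and $\mathbf{y}$ as its two endpoints, proving adjacency.

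For the forward implication, let $e$ be the edge joining $\mathbf{x}$ and $\mathbf{y}$. A dimension count shows that exactly $n-2$ box inequalities are tight throughout the relative interior of $e$, freezing $n-2$ distinct coordinates at one of their bounds; let $i,j$ denote the two remaining coordinates. Any edge direction lies in $H$ and is supported on $\{i,j\}$, so it must be a nonzero scalar multiple of $\mathbf{e}_j - \mathbf{e}_i$. After possibly interchanging $i$ and $j$, write $\mathbf{y} = \mathbf{x} + \delta(\mathbf{e}_j - \mathbf{e}_i)$ for some $\delta > 0$. This already yields $y_k = x_k \in \{l_k, u_k\}$ for $k \ne i,j$. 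Finally, since $\mathbf{y}$ is the other endpoint of $e$, the value $\delta$ is the largest for which $\mathbf{x} + \delta(\mathbf{e}_j - \mathbf{e}_i)$ still satisfies the box constraints on coordinates $i$ and $j$, giving $\delta = \min(u_j - x_j,\ x_i - l_i) = \delta_{ij}$.

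I do not foresee a substantial obstacle; the only mildly subtle points are checking the dimension claim in the forward direction (which rests on the independence of the tight coordinate constraints) and observing in the backward direction that $\delta_{ij} > 0$ together with $\mathbf{x}$ being a vertex forces $\mathbf{x}$ to coincide with one of the two endpoints of the segment $F$ rather than lying in its relative interior.
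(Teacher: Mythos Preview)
Your proposal is correct and follows essentially the same approach as the paper. The paper phrases the forward direction as ``the edge of $P$ is the intersection of a two-dimensional face of the box $Q$ with the hyperplane $H$,'' whereas you phrase it via a dimension count on the tight box constraints; these are two views of the same argument, and your backward direction (defining the one-dimensional face $F$ by freezing the $n-2$ coordinates at their bounds) matches the paper's almost verbatim.
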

\begin{proof}
  For the first direction of the stated equivalence, suppose that $\mathbf{x}$ and $\mathbf{y}$ are adjacent on the skeleton of $P$. Then the segment $\conv(\{\mathbf{x},\mathbf{y}\})$ forms an edge of~$P$ (and thus, a face of $P$ of dimension $1$). Since $P$ is the intersection of the axis-aligned box $Q:=[a_1,b_1] \times \cdots \times [a_n,b_n]$ with the hyperplane $H$ of equation $\sum_{i=1}^n x_i=N$, it follows that $\conv(\{\mathbf{x},\mathbf{y}\})$ is the intersection of a two-dimensional face of the box $Q$ and the hyperplane $H$.
  Such a two-dimensional face is an axis-aligned rectangle
  \[
    R = [l_i, u_i] \times [l_j, u_j]
  \]
  lying in a plane orthogonal to all $\mathbf{e}_k$ such that $k\not= i,j$.
  In this plane, the edge is simply the intersection of $R$ with a line 
  parallel to $\mathbf{e}_j-\mathbf{e}_i$.
  It is straightforward to check that the coordinates of its endpoints obey the equations above.

  Vice-versa, suppose that two distinct vertices $\mathbf{x},\mathbf{y}$ of $P$ satisfy the stated conditions, i.e., there exist $k, \ell \in [n]$ such that $y_k=x_k \in \{u_k,l_k\}$ for every $k \in [n]\setminus \{i,j\}$ and $y_j=x_j+\delta_{ij}, y_i=x_i+\delta_{ij}$, where $\delta_{ij}:=\min\{u_j-x_j,x_i-l_i\}$. We now claim that $\conv(\{\mathbf{x},\mathbf{y}\})$ is a face (and thus an edge) of $P$, as claimed. To see this, it suffices to note that the face $F \subseteq P$ defined as
  $$F=\left\{z \in \mathbb{R}^n\bigg\vert u_i \le z_i \le l_i, u_j \le z_j \le l_j,\forall k \in [n]\setminus\{i,j\}:z_k=x_k, \sum_{i=1}^{N}{z_i}=N\right\}$$ contains both $\mathbf{x}$ and $\mathbf{y}$ (and thus $\conv(\{\mathbf{x},\mathbf{y}\})$) and is $1$-dimensional. This shows that $\mathbf{x}, \mathbf{y}$ are adjacent on the skeleton of $P$, concluding the proof.
\qed\end{proof}

From Lemma~\ref{lem:flips}, the edges of $P$ are
parallel to $\mathbf{e}_j-\mathbf{e}_i$ for a pair $i,j\in [n]$.
Hence from Theorem~\ref{thm:gp}, $P$ is a polymatroid and
Problem~\ref{pb:spbox} can also be reduced to Problem~\ref{pb:spgp}.
The submodular function $f:2^{[n]}\to\mathbb{R}$ such that $P_f = P$ is
\[
f(U) = \min \left\{ N - \sum_{i\in [n]\setminus U} l_i, \sum_{i\in U} u_i \right\}.
\]

We say that an $(i,j)$-flip is a \emph{$(U,V)$-flip} for some subsets $U\subseteq [n]$ and $V\subseteq [n]$ if $i\in U$ and $j\in V$.
Let $S := \{i\in [n] : a_i = u_i\}$, and $T := \{i\in [n]: a_i=l_i\}=[n]\setminus S$. The set $S$ is therefore the set of indices of components of $\mathbf{a}$ that have to eventually decrease, while $T$ denote those components that have to eventually increase. 


We show that Problem~\ref{pb:spbox} can be cast as a special case of the \emph{fixed charge transportation} problem on the bipartite graph $S\times T$. This problem is a minimum cost maximum flow problem in which the cost of every edge is the sum of a variable cost proportional to the flow on the edge and a fixed charge for each edge with nonzero flow.
It has a long history dating back to the pioneering work from Hirsch and Dantzig~\cite{HD68}, and is known to be \NP-hard. The special case where the cost is simply the number of edges with nonzero flow, and there is no variable cost, is known as the \emph{pure constant fixed charge transportation} (PCFCT) problem.
The PCFCT problem has been considered in particular by Hultberg and Cardoso~\cite{HC97} under the name \emph{teacher assignment problem}, and by Schrenk, Finke, and Cung~\cite{SFC11}. In a sense, it is about finding the ''most degenerate'' solution to a set of flow and positivity constraints.

\begin{problem}[Pure constant fixed charge transportation (PCFCT)]
Given two sets $S$ and $T$ and integers $\{s_i\}_{i\in S}$ and $\{t_i\}_{i\in T}$ such that $\sum_{i\in S} s_i = \sum_{j\in T} t_j$, find reals $x_{ij}\geq 0$, for $(i,j)\in S\times T$ such that:
\begin{eqnarray*}
\sum_{j\in T} x_{ij} & = & s_i, \ \forall i\in S \\   
\sum_{i\in S} x_{ij} & = & t_j, \ \forall j\in T,  
\end{eqnarray*}
and the number of nonzero values $|\{(i,j)\in S\times T: x_{ij}>0\}|$ is minimum.
\end{problem}

The PCFCT problem is known to be strongly \NP-hard, see Hultberg and Cardoso~\cite{HC97}. 
Their proof is by reduction from the strongly \NP-complete 3-partition problem. 
They also observed that the problem can in fact be cast as a maximum partition problem. We define a \emph{same-sum partition} of $S$ and $T$ as a collection of pairs $(S_{\ell},T_{\ell})$ of non-empty subsets of $S$ and $T$, such that $\{S_{\ell}\}$ is a partition of $S$, $\{T_{\ell}\}$ is a partition of $T$, and for each $\ell$, we have $\sum_{i\in S_{\ell}} s_i = \sum_{j\in T_{\ell}} t_j$.

\begin{lemma}[Hultberg and Cardoso~\cite{HC97}]
\label{lem:partition}
    An optimal solution of the PCFCT problem has value $|S|+|T|-m$, where $m$ is the maximum size of a same-sum partition of $S$ and $T$.
\end{lemma}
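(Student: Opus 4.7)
The plan is to view a feasible solution of the PCFCT instance through the bipartite graph induced by its support. Given a feasible flow $\{x_{ij}\}$, let $F := \{(i,j)\in S\times T : x_{ij}>0\}$ and consider the bipartite graph $G_F$ on $S\cup T$ with edge set $F$. The key observation is that within each connected component of $G_F$, flow conservation forces the sum of the supplies at the $S$-vertices of the component to equal the sum of the demands at the $T$-vertices of the component. Hence if $G_F$ has $m'$ connected components, they partition $S$ and $T$ into $m'$ same-sum blocks, so $m' \le m$.

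For the lower bound I would use that any graph on $k$ vertices with $c$ connected components has at least $k-c$ edges. Applied to $G_F$, which has $|S|+|T|$ vertices and $m'$ connected components (after possibly discarding isolated vertices, which cannot occur since every $i\in S$ and every $j\in T$ must appear in $F$ unless $s_i=0$ or $t_j=0$, a case that can be handled by trivial preprocessing), this yields $|F| \ge |S|+|T|-m' \ge |S|+|T|-m$. So every feasible solution uses at least $|S|+|T|-m$ positive edges.

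For the matching upper bound I would start from a maximum same-sum partition $\{(S_\ell,T_\ell)\}_{\ell=1}^m$ and solve the transportation problem independently on each block $(S_\ell,T_\ell)$. Within one block, the classical theory of the transportation problem guarantees a basic feasible solution with support of size at most $|S_\ell|+|T_\ell|-1$; concretely, the northwest-corner rule (or any construction of a basic feasible solution in a bipartite transportation problem) produces a nonnegative flow whose support is a spanning forest of $S_\ell\cup T_\ell$, and in fact a spanning tree when the block sums agree, which is guaranteed by the same-sum property. Summing across the $m$ blocks gives a globally feasible flow $\{x_{ij}\}$ with
\[
|\{(i,j) : x_{ij}>0\}| \;\le\; \sum_{\ell=1}^m \bigl(|S_\ell|+|T_\ell|-1\bigr) \;=\; |S|+|T|-m,
\]
which matches the lower bound.

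I expect the main subtlety to be a careful justification of the upper bound: one must argue that the northwest-corner construction on each block really produces a \emph{spanning} tree (equivalently, exactly $|S_\ell|+|T_\ell|-1$ positive entries) rather than fewer, since a strictly smaller support would correspond to a further splitting of $(S_\ell,T_\ell)$ into two same-sum sub-blocks, contradicting the maximality of $m$. Put differently, maximality of the partition precisely rules out the degenerate transportation cases in which the basic feasible solution has strictly fewer than $|S_\ell|+|T_\ell|-1$ positive values, so the two bounds meet and the lemma follows.
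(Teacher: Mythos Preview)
The paper does not actually supply a proof of this lemma: it cites Hultberg and Cardoso, states that ``the proof is simple,'' and moves on. So there is no in-paper argument to compare against; your proposal is the proof the reader would have to reconstruct, and it is the standard one.

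Your argument is correct. The lower bound via the support graph and the elementary bound $|F|\ge |S|+|T|-m'$ is exactly right (with the harmless preprocessing of zero supplies/demands), and the upper bound via a basic feasible solution on each block of a maximum same-sum partition gives $\le |S|+|T|-m$.

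One remark: the ``subtlety'' you flag at the end is not actually needed. For the upper bound you only need that the northwest-corner rule (or any basic feasible solution) on the block $(S_\ell,T_\ell)$ has \emph{at most} $|S_\ell|+|T_\ell|-1$ positive entries; whether it has exactly that many is irrelevant, since the lower bound already pins the optimum at $|S|+|T|-m$. Your observation that maximality of the partition forces each block to be non-degenerate is true and pleasant, but the proof goes through without it.
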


The proof of Lemma~\ref{lem:partition} is simple. We give an example in Figure~\ref{fig:partition}.

\begin{figure}
    \centering
    \includegraphics[page=2, scale=.8]{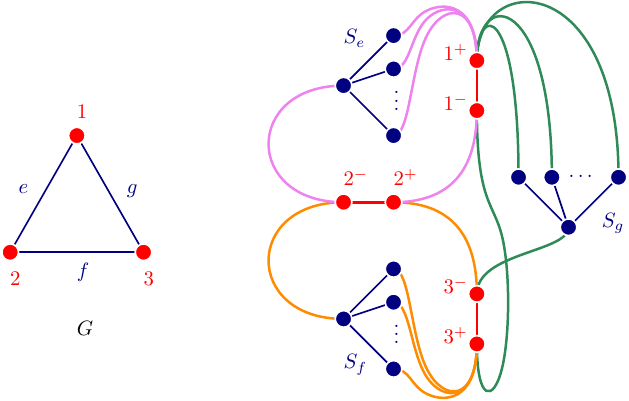}
    \caption{Illustration of Lemma~\ref{lem:partition}. The top and bottom rows show the values of the $s_i$ and $t_j$, respectively. There exists a same-sum partition of size $m=3$, and a corresponding solution of value $|S|+|T|-m=5+5-3=7$.}
    \label{fig:partition}
\end{figure}

We now prove the following key lemma, establishing that the PCFCT problem can be reduced to Problem~\ref{pb:spbox}.

\begin{lemma}
  \label{lem:sp2fct}
  Any instance of the PCFCT problem can be converted into an equivalent instance $(\mathbf{a},\mathbf{b})$ of Problem~\ref{pb:spbox}, so that the length of a shortest path is exactly equal to the value of an optimal solution of the PCFCT instance.
\end{lemma}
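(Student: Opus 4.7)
The plan is to use the natural encoding of a PCFCT instance as an instance of Problem~\ref{pb:spbox}: set $n := |S|+|T|$, index coordinates by $S\sqcup T$, and define $a_i := s_i$, $b_i := 0$ for $i\in S$, and $a_j := 0$, $b_j := t_j$ for $j\in T$. Then $\sum_v a_v = \sum_v b_v = N$, and the resulting polytope $P$ has $l_v=0$ for every $v$, with $u_i=s_i$ and $u_j=t_j$. By Lemma~\ref{lem:partition}, it suffices to prove that the shortest-path length $L$ from $\mathbf{a}$ to $\mathbf{b}$ on $P$ equals $|S|+|T|-m$, where $m$ is the maximum size of a same-sum partition.

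For the upper bound $L\leq |S|+|T|-m$, I would fix a maximum same-sum partition $\{(S_\ell,T_\ell)\}_{\ell=1}^m$ and build, inside each part separately, a sequence of $|S_\ell|+|T_\ell|-1$ flips; concatenating them (the parts use disjoint coordinates) yields a path of total length $\sum_\ell (|S_\ell|+|T_\ell|-1) = |S|+|T|-m$. Each part is \emph{irreducible} by maximality, so whenever $|S_\ell|+|T_\ell|\geq 3$ one has $s_i\neq t_j$ for all $i\in S_\ell$, $j\in T_\ell$ (otherwise $(\{i\},\{j\})$ would split off). The per-part construction proceeds by induction on $|S_\ell|+|T_\ell|$: pick any $i\in S_\ell$ and $j\in T_\ell$, perform the $(i,j)$-flip of Lemma~\ref{lem:flips} (which saturates exactly one of the two coordinates since $s_i\neq t_j$), and recurse on the resulting sub-instance, which is again same-sum and irreducible. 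The base case $|S_\ell|=|T_\ell|=1$ is a single flip of size $s_i=t_j$.

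For the matching lower bound $L\geq |S|+|T|-m$, I would extract a same-sum partition from an arbitrary path $P$ of length $L$. Let $H$ be the multigraph on $[n]$ whose edges are the unordered pairs of coordinates touched by the flips of $P$, and let $C_1,\dots,C_{m'}$ be its connected components. Each flip of Lemma~\ref{lem:flips} preserves the sum of the two coordinates it modifies, so $\sum_{v\in C_\ell} y_v$ is invariant along $P$; comparing its values at $\mathbf{a}$ and $\mathbf{b}$ gives
\[ \sum_{i\in C_\ell\cap S} s_i \;=\; \sum_{j\in C_\ell\cap T} t_j. \]
Since $s_i,t_j>0$ and $a_v\neq b_v$ for every $v$, each $C_\ell$ is nonempty and meets both $S$ and $T$, so $\{(C_\ell\cap S,\,C_\ell\cap T)\}_\ell$ is a valid same-sum partition, forcing $m'\leq m$. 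Because $H$ restricted to $C_\ell$ is connected on $|C_\ell|$ vertices, it contains at least $|C_\ell|-1$ edges; charging one flip per edge gives $L\geq \sum_\ell (|C_\ell|-1) = n-m' \geq |S|+|T|-m$.

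The delicate verification will be that the sub-instance produced at each step of the upper-bound induction is itself irreducible. For a flip $(i,j)$ with $s_i<t_j$, reducing to $(S_\ell\setminus\{i\},T_\ell)$ with $t_j$ replaced by $t_j-s_i$, any hypothetical proper same-sum partition of the reduced instance with $j$ sitting in some part $(S'',T'')$ could be extended by attaching $i$ to $S''$, producing a proper same-sum refinement of the original and contradicting irreducibility; the case $s_i>t_j$ is symmetric. Once this check is in place the induction closes, and the matching bounds fit together.
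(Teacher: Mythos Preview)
Your encoding and your lower-bound argument via connected components of the flip multigraph are exactly the paper's, and your irreducibility-preservation check is correct. The upper bound, however, has a real gap in the inductive step. After the first $(i,j)$-flip in a part with, say, $s_i<t_j$, coordinate $i$ is saturated but coordinate $j$ sits at $x_j=s_i$, strictly between its bounds $0$ and $t_j$ in $P$. You then recurse on the sub-instance $(S_\ell\setminus\{i\},T_\ell)$ with $t_j$ replaced by $t_j-s_i$; by induction this yields a path on the sub-instance's own polytope $P'$. But you need the resulting sequence to be a path on the \emph{original} polytope $P$, and the natural embedding $P'\hookrightarrow P$ (shift coordinate $j$ by $s_i$, set $x_i=0$) does not send every vertex of $P'$ to a vertex of $P$: a vertex of $P'$ with $x'_j=0$ and some other coordinate $k$ strictly between its bounds lifts to a point of $P$ with \emph{two} unsaturated coordinates ($j$ and $k$), which by Lemma~\ref{lem:flips} is not a vertex. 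Concretely, if the inductive path's first flip in $P'$ is $(i',j')$ with $j'\neq j$---which your ``pick any'' explicitly allows---the resulting point is not on the skeleton of $P$. (A small example: $s_1=1,s_2=4,t_3=2,t_4=3$; flip $(1,3)$, then in the sub-instance flip $(2,4)$: you land at $(0,1,1,3)$, which has two interior coordinates.)

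The fix is to constrain every flip to involve the unique currently-unsaturated coordinate; this is precisely the ``active element $v$'' that the paper maintains explicitly via invariants. With that constraint your induction goes through: each flip either saturates the active coordinate (and the other one becomes active) or saturates the other one (and the active coordinate stays active), so at every intermediate step exactly one coordinate is off its bound and Lemma~\ref{lem:flips} applies. Your irreducibility argument is then exactly what guarantees that the two sides of a part are never emptied simultaneously until the final flip of that part.
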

\begin{proof}
  Given an instance $(\{s_i\}_{i\in S}, \{t_j\}_{j\in T}\}$ of the PCFCT problem, we construct an equivalent instance $(\mathbf{a},\mathbf{b})$ of Problem~\ref{pb:spbox} as follows.
  First, we can assume without loss of generality that $S=[n_1]$ and 
  $T=\{n_1+1, n_1+2,\ldots ,n_1+n_2\}$, and let $n=n_1+n_2$.   
  We now let $\mathbf{a}=(s_1,s_2,\ldots ,s_{n_1},0,0,\ldots ,0)$, and
  $\mathbf{b}=(0,0,\ldots ,0,t_1,t_2,\ldots ,t_{n_2})$, and $u_i=s_i$ for each $i\in S$, $u_j=t_j$ for each $j\in T$, and finally $l_i=0$ for all $i\in S\cup T$.

  We first show that given an optimal solution to the PCFCT problem, we can find a path between $\mathbf{a}$ and $\mathbf{b}$ whose length is the value of the solution.
  
  Recall from Lemma~\ref{lem:partition} that an optimal solution can be mapped to a same-sum partition $(S_{\ell},T_{\ell})_{\ell=1}^{m}$ of $S$ and $T$ of maximum size, such that the optimal value of the PCFCT equals $|S|+|T|-m$. To construct a path on the skeleton of $P$ from $\mathbf{a}$ to $\mathbf{b}$ of length $|S|+|T|-m$, we consider the following simple algorithm. 
  
  The algorithm keeps track of two subsets $X$ and $Y$ of $S$ and $T$, respectively, the current vertex $\mathbf{x}$ of the polytope $P$, as well as a special ``active'' element $v$. Initially, the algorithm sets $X\leftarrow S, Y \leftarrow T$, $\mathbf{x}\leftarrow \mathbf{a}$ and picks $v \in S_1$ arbitrarily. 

  During the process, we maintain the following invariants.

  \begin{enumerate}
      \item $\mathbf{x}$ is a vertex of $P$.
    \item If $\ell$ is the index such that $v \in S_\ell \cup T_\ell$, then $X\subseteq S_\ell \cup \dots \cup S_m$, $Y\subseteq T_\ell \cup \dots \cup T_m$ and $v \in X\cap S_\ell$ or $v \in Y \cap S_\ell$.
      \item $X=\{k \in S|x_k>0\}$ and $Y=\{k \in T|x_k<t_k\}$.
      \item $x_k=u_k=s_k$ for every $k \in X\setminus \{v\}$ and $x_k=l_k=0$ for every $k \in Y\setminus \{v\}$.
      \item For every $\ell=1,\ldots,m$, it holds that $$\sum_{k \in X \cap S_\ell}{x_k}=\sum_{k \in Y \cap S_\ell}{(t_k-x_k)}.$$
  \end{enumerate}

  Note that these invariants hold after the initialization of $(\mathbf{x},v,X,y)$. Now, for $\ell=1,\ldots,m$, while $v \in S_\ell \cup T_\ell$, the algorithm updates $(\mathbf{x},v,X,y)$ as follows.

 It starts by picking a pair $(i,j) \in (X\cap S_\ell)\times (Y\cap T_\ell)$ such that $v\in \{i,j\}$. Note that this is always possible, since invariants (2), (3) and (5) together enforce that $X \cap S_\ell$ and $Y \cap S_\ell$ are non-empty.
 
The algorithm now performs an $(i,j)$-flip at the vertex $\mathbf{x}$ of $P$ (see the definition from Lemma~\ref{lem:flips}). Concretely, we put $\delta \leftarrow \min\{x_i, t_j-x_j\}$ and update $x_i\leftarrow x_i-\delta, x_j \leftarrow x_j+\delta$. Since $x_k \in \{l_k,u_k\}$ for every $k \notin \{i,j\}$ by the invariants (3) and (4), Lemma~\ref{lem:flips} shows that this change of $\mathbf{x}$ indeed corresponds to moving along an edge of the polytope $P$, in particular invariant (1) is preserved. Note that by definition of $\delta$, we have $x_i=0$ or $x_j=l_j=t_j$ after the flip. The algorithm now proceeds according to the following cases:

\begin{itemize}
    \item If $x_i=0$ and $x_j<t_j$ after the flip, then the algorithm updates $X\leftarrow X\setminus \{i\}$ and $v\leftarrow j$. 
    \item Symmetrically, if $x_i>0$ and $x_j=t_j$ after the flip, then the algorithm updates $Y\leftarrow Y\setminus \{j\}$ and $v\leftarrow i$.
    \item If $x_i=0$ and $x_j=t_j$ after the flip, then we claim that we must have $X\cap S_\ell=\{i\}$  and $Y\cap S_\ell=\{j\}$. Indeed, if $x_i=0$ and $x_j=t_j$ after the flip, then we must have had $x_i=t_j-x_j$ before the flip. Using invariant (5), this implies
    $$\sum_{k \in (X\cap S_\ell)\setminus \{i\}}{x_k}=\sum_{k \in (Y \cap S_\ell)\setminus \{j\}}{(t_k-x_k)}.$$However, by invariant (4) we have $x_k=s_k$ for every $k \in (X\cap S_\ell)\setminus \{i\}$ and $x_k=0$ for every $k \in (Y\cap S_\ell)\setminus \{j\}$. This implies that refining the same-sum partition of $(S,T)$ by replacing the pair $(S_\ell,T_\ell)$ with the two pairs $((X\cap S_\ell)\setminus \{i\},(X\cap S_\ell)\setminus \{j\})$ and $((S_\ell\setminus X)\cup \{i\},(T_\ell\setminus Y)\cup \{j\})$ would yield a larger same-sum partition, a contradiction --- unless $(X\cap S_\ell)\setminus \{i\}$ and $(Y\cap T_\ell)\setminus \{j\}$ are empty. This proves that $X\cap S_\ell=\{i\}$  and $Y\cap S_\ell=\{j\}$, as claimed above. 

    Finally, the algorithm proceeds to update $X\leftarrow X\setminus \{i\}, Y\leftarrow Y\setminus \{j\}$. If $\ell<m$, then it updates $v$ to be any vertex of $S_{\ell+1}$, and the for loop now continues with the new index $\ell+1$ instead of $\ell$. If $\ell=m$, then the algorithm stops and returns the sequence of all vertices of $P$ visited during the process. 
\end{itemize}

Pause to note that each of the three types of updates above preserve our invariants (1)--(5). It is clear from the above update rules that in every step of the algorithm, whenever we take a step along an edge of $P$ to a new vertex, the value of $|X|+|Y|$ decreases by one, and in fact sometimes this value decreases by two, namely exactly once in each of the $m$ iterations of the for loop, when we depart the while-loop for a fixed $\ell=1,\ldots,m$ and make the intersection of $X$ and $Y$ with $S_\ell$ and $T_\ell$ empty.

  The above implies that the algorithm finishes after a total of $|S|+|T|-m$ steps, and returns a sequence of $|S|+|T|-m+1$ vertices of $P$, any two consecutive of which are adjacent on the skeleton. Note that at each point of the algorithm, by invariant (3) it holds that $|X|+|Y|$ is the hamming-distance between the current vertex $\mathbf{x}$ of $P$ and the target-vertex $\mathbf{b}$. In particular, the last vertex of this sequence is equal to $\mathbf{b}$. It follows that the sequence of vertices of $P$ that we output forms a path from $\mathbf{a}$ to $\mathbf{b}$ on the skeleton of $P$ that has length $|S|+|T|-m$, as desired. This concludes the first part of the proof of the lemma.

\medskip
  
  We now turn to the second part of the proof, and show that given a path between $\mathbf{a}$ and $\mathbf{b}$ on the skeleton of $P$, we can find a solution of the PCFCT instance whose value is at most the length of the path. 
  Let us consider all flips on this path, and construct the undirected graph $G$ on $S\cup T$ in which two vertices $i,j$ are adjacent whenever there is an $(i,j)$-flip in the path. Note that some flips might not be $(S,T)$-flips, so $G$ is not necessarily bipartite.
  Each connected component $S_{\ell}\cup T_{\ell}$ of $G$ must be such that $\sum_{i\in S_{\ell}} s_i = \sum_{j\in T_{\ell}} t_j$. Hence the connected components induce a same-sum partition of $S$ and $T$, of some size $m$. So from Lemma~\ref{lem:partition}, we can construct a solution of the PCFCT problem of value $|S|+|T|-m=n-m$. On the other hand, the number of flips in each connected component $S_{\ell}\cup T_{\ell}$ of $G$ is at least $|S_{\ell}|+|T_{\ell}|-1$, hence the total number of flips on the path is at least $\sum_{\ell =1}^m (|S_{\ell}|+|T_{\ell}|-1) = n-m$.\qed
  \end{proof}

Combining Lemma~\ref{lem:sp2fct} with the fact that the PCFCT problem is strongly \NP-hard proves Lemma~\ref{lem:hard}, and therefore Theorem~\ref{thm:main}.

\section{Flip distances between acyclic orientations of hypergraphs}
\label{sec:ao}

We introduce hypergraphic polytopes, the associated flip distance problem, and our two results on the complexity of this problem: an \APX-hardness proof, and an approximation algorithm.

\subsection{Hypergraphic polytopes}

Every hypergraph $H=(V, \mathcal{E})$, where $\mathcal{E}\subseteq 2^V\setminus\{\emptyset\}$, induces a corresponding submodular function $f_H$ according to

\begin{equation}
\label{f:hypergraph}
f_H(U) := |\{ e\in\mathcal{E} : U\cap e\not=\emptyset \}|
\end{equation}

for every $U \subseteq V$. The base polytope induced by the polymatroid associated with $f_H$ is then called the \emph{hypergraphic polytope} of the hypergraph $H$, and denoted by $P_H$. Hypergraphic polytopes and the corresponding polymatroids have been studied in various guises in the past 50 years as covering hypermatroids~\cite{H74}, Boolean polymatroids~\cite{VW93}, or relational submodular functions~\cite{AA17}. Hypergraphic polytopes form an important family of generalized permutohedra~\cite{PRW08,P09,AA17,R22}. 

Interestingly, this definition encompasses most of the special cases we described in Section~\ref{sec:cases}. When $H$ is a graph, the polymatroids are the graphical zonotopes, whose vertices are in bijection with acyclic orientations of the graph. When $H$ is defined as 
\[
H = \{\{i,i+1,\ldots ,j-1,j\} : 1\leq i<j\leq n\},
\]
then one can check that Eqn.~\ref{f:hypergraph} matches Eqn.~\ref{f:assoc}, hence the hypergraphic polytope is nothing but the associahedron. When the hyperedges correspond with building sets of some graph $G$, then the definition matches that of Eqn.~\ref{f:grassoc}, and the polymatroid is a graph associahedron.
We refer to~\cite{PRW08,P09,BBM19,R22,CHMMM23} for detailed treatments of the connections of the polytopes $P_H$ to combinatorial structures. 

\subsection{Acyclic orientations of hypergraphs}

It is known that the vertices of the hypergraphic polytope $P_H$ are in one-to-one correspondence with so-called \emph{acyclic orientations} of the hypergraph $H$~\cite{BBM19,R22,CHMMM23}, which we define now.

An \emph{orientation} of a hypergraph $H=(V,\mathcal{E})$ is a mapping $h:\mathcal{E}\rightarrow V$ such that $h(e)\in e$ for every $e \in \mathcal{E}$. The element $h(e)$ is also called the \emph{head} of an edge $e$. With every orientation $h$ of $H$ we can associate a directed graph $D_h$ that has vertex set $h$ and an arc from a vertex $u$ to a vertex $v$ if and only if there exists some $e \in \mathcal{E}$ such that $h(e)=v$ and $u \in e\setminus \{v\}$. Intuitively, $D_h$ is the union, over all $e \in \mathcal{E}$, of the inwards-oriented stars centered at the head of $e$ and whose leaves are all other elements of $e$.
With this notion at hand, we call an orientation $h$ of $H$ \emph{acyclic} if $D_h$ is an acyclic digraph.

\begin{figure}
    \centering
    \includegraphics[page=4,scale=.6]{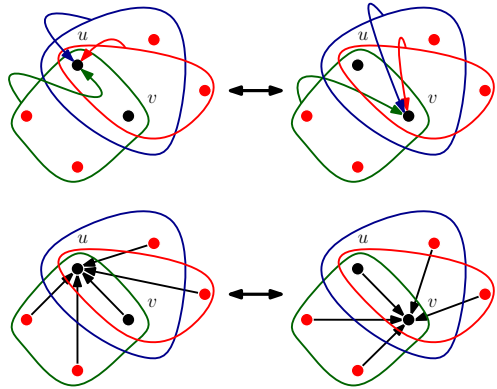}
    \caption{An example of flip, showing all hyperedges $e\in\mathcal{E}$ such that $e \supseteq \{u,v\}$ and satisfying $h_1(e)=u$, and arrows pointing to their heads $h_1(e)$ (left) and $h_2(e)$ (right). The bottom figure shows the corresponding parts of the digraphs $D_{h_1}$ and $D_{h_2}$.}
    \label{fig:flip}
\end{figure}

Adjacency of vertices on the skeleton of $P_H$ has also been characterized, see Theorem~\ref{thm:adjacency} below. In order to state it, we need one more definition, which is that of a \emph{flip}. Given two distinct acyclic orientations $h_1$, $h_2$ of a hypergraph $H=(V,\mathcal{E})$, we say that $h_2$ is obtained from $h_1$ by a \emph{flip} at an ordered pair $(u, v) \in V^2$ of distinct vertices, if $h_2$ is obtained from $h_1$ by redefining the head of every hyperedge $e\in \mathcal{E}$ with $e \supseteq \{u,v\}$ that satisfies $h_1(e)=u$ as $h_2(e)=v$, while keeping all other head assignments the same as in $h_1$. We note that the above definition of a flip is symmetric, in the following sense: If $h_2$ can be obtained from $h_1$ by flipping at $(u,v)$, then $h_1$ is obtained from $h_2$ by flipping at $(v,u)$. 
Also, note that 
when $h_2$ is obtained from $h_1$ by flipping at $(u,v)$, then a flip at $(v,u)$ for $h_1$ is impossible. Thus, in the following we can say without ambiguity that two acyclic orientations $h_1$ and $h_2$ are obtained from each other by flipping the unordered pair $\{u,v\}$ of distinct vertices if $h_2$ can be obtained from $h_1$ by flipping $(u,v)$ or $(v,u)$. An illustration is given in Figure~\ref{fig:flip}

\noindent The following statement can be deduced as a special case of Theorem~2.18 in~\cite{BBM19}.

\begin{theorem}
\label{thm:adjacency}
    Let $H=(V,\mathcal{E})$ be a hypergraph. Two vertices $x, y$ of the polytope $P_H$ corresponding to acyclic orientations $h_1, h_2$ of $H$ are adjacent on the skeleton of $P_H$ if and only if there exists a pair $\{u,v\} \subseteq V$ of distinct vertices such that $h_1$ and $h_2$ can be obtained from each other by flipping $\{u,v\}$. 
\end{theorem}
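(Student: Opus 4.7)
The plan is to prove both directions using the greedy algorithm for maximizing linear functions on polymatroids, combined with the following bijective description of vertices of $P_H$: for any strict linear order $\pi$ of $V$, the greedy maximizer of a linear function whose coefficients respect $\pi$ is exactly the vertex $x_h$, where $h$ is the unique acyclic orientation of $H$ having $\pi$ as a topological order of $D_h$; in this bijection, $h(e)$ is the $\pi$-maximal element of $e$. Conversely, every vertex $x_h$ of $P_H$ arises from any topological order of $D_h$.

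For the ``flip implies adjacency'' direction, suppose $h_2$ is obtained from $h_1$ by a flip at $\{u,v\}$. The acyclicity of $h_1$ rules out any hyperedge $e$ with $h_1(e)=v$ and $u\in e$ (otherwise arcs $u\to v$ and $v\to u$ would coexist in $D_{h_1}$), so all flipped hyperedges have $h_1(e)=u$ and $h_2(e)=v$. I first claim that there exists a topological order $\pi$ of $D_{h_1}$ placing $v$ immediately before $u$. The only obstruction would be a vertex $w\in V\setminus\{u,v\}$ on a directed path from $v$ to $u$ in $D_{h_1}$, but such a path would survive in $D_{h_2}$ (arcs ending outside $u$ are preserved, and if the last arc of the path enters $u$ via a flipped hyperedge it is redirected to enter $v$), and combining it with the new arc $u\to v$ of $D_{h_2}$ would yield a directed cycle in $D_{h_2}$, contradicting acyclicity of $h_2$. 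Once $\pi$ is fixed, a routine case analysis of hyperedges shows that the transposition $\pi'$ of $\pi$ swapping $u$ and $v$ is a topological order of $D_{h_2}$, and the orientations recovered from $\pi$ and $\pi'$ are exactly $h_1$ and $h_2$. Choosing any linear function $c$ with $c_u=c_v$ and entries on $V\setminus\{u,v\}$ strictly respecting $\pi$, the optimal face of $P_H$ contains exactly the two vertices $x_{h_1}$ and $x_{h_2}$, so this face equals the edge $[x_{h_1},x_{h_2}]$.

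For the converse, assume $x_{h_1}$ and $x_{h_2}$ are adjacent on $P_H$. By Theorem~\ref{thm:gp}, the edge between them is parallel to $\mathbf{e}_v-\mathbf{e}_u$ for some distinct $u,v\in V$. Any linear function $c$ whose optimal face on $P_H$ equals this edge must satisfy $c_u=c_v$ with all other entries strict, and hence admits exactly two strict linear orders $\pi$ and $\pi'$ compatible with it, differing by the swap of $u$ and $v$ at consecutive positions. The orientations obtained from $\pi$ and $\pi'$ by the greedy rule are $h_1$ and $h_2$, and a case analysis shows that they agree on every hyperedge except those $e\supseteq\{u,v\}$ whose remaining elements are all $\pi$-earlier than $u$ and $v$, on which $h_1(e)=u$ while $h_2(e)=v$; this is exactly the definition of a flip at $\{u,v\}$. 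The main technical obstacle I expect is establishing the existence of the topological order of $D_{h_1}$ with $v$ directly before $u$, where the acyclicity of $h_2$ (and not just that of $h_1$) is essential.
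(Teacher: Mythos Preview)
Your argument is correct. The paper does not give its own proof of this theorem; it simply cites it as a special case of Theorem~2.18 in~\cite{BBM19}. Your proposal therefore supplies a self-contained proof where the paper relies on an external reference. The approach via the greedy/permutation description of the vertices of $P_H$ and the coarsening of the braid fan is standard and is essentially what underlies the cited result, so in spirit you are reproving the same mechanism rather than taking a genuinely different route.

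One small inaccuracy to fix in the converse direction: it is not true that \emph{every} linear functional $c$ whose optimal face is the edge $[x_{h_1},x_{h_2}]$ has all entries $c_w$, $w\notin\{u,v\}$, pairwise distinct. What you need, and what suffices, is that \emph{some} such $c$ can be chosen: pick $c$ in the relative interior of the normal cone of the edge (this forces $c_u=c_v$ since the edge is parallel to $\mathbf{e}_v-\mathbf{e}_u$), and perturb within that open set to make all remaining coordinates distinct. With that choice there are exactly two compatible strict orders $\pi,\pi'$, and the rest of your argument goes through. Similarly, you are implicitly using that for a polymatroid the set of $c$-optimal vertices equals the set of greedy solutions over all strict orders compatible with $c$; this is standard but worth stating explicitly.
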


\subsection{Flip distances between acyclic orientations of hypergraphs}

Theorem~\ref{thm:adjacency} implies that the computation of shortest paths on the skeleton of a hypergraphic polytope $P_H$ boils down to a flip distance problem between acyclic orientations of $H$, as follows:

\begin{problem}\label{prob:hyperdistance}
    Given a hypergraph $H=(V,\mathcal{E})$ and two acyclic orientations $h_1, h_2$ of $H$ as input, find a shortest sequence vertex-pairs $\{u_i,v_i\}$, $i=1,\ldots ,\ell$ such that $h_2$ can be obtained from $h_1$ by flipping, in order, $\{u_1,v_1\},\ldots,\{u_\ell,v_\ell\}$. 
\end{problem}

Problem~\ref{prob:hyperdistance} generalizes several of the previous problems mentioned in Section~\ref{sec:bg}. It clearly includes the special case of flip distances between acyclic orientations of graphs, when the hypergraph $H$ is a graph. The flip distance between triangulations or binary search trees is the special case of Problem~\ref{prob:hyperdistance} in which the underlying hypergraph is the set of intervals in $[n]$, as mentioned above. 

In the following, we will prove several results regarding the computational complexity of Problem~\ref{prob:hyperdistance}.
On the negative side, we show that approximating the length of a shortest flip sequence between two acyclic orientations of a general hypergraph $H$ is \APX-hard, even restricted to hypergraphs $H$ of bounded degree and bounded hyperedge size.

\begin{theorem}
\label{thm:hard2}
There exists an absolute constant $\Delta \in \mathbb{N}$ such that Problem~\ref{prob:hyperdistance} is \APX-hard even when the input hypergraph $H=(V,\mathcal{E})$ is known to have maximum degree at most $\Delta$ and be such that $|e|\le 3$ for every $e\in \mathcal{E}$.



\end{theorem}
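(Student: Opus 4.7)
The plan is to prove APX-hardness by an $L$-reduction from an APX-hard bounded-occurrence problem, for instance Minimum Vertex Cover on cubic graphs (MIN-VC-$3$), which is known to be APX-hard. Given a cubic graph $G=(V_G,E_G)$, the goal is to construct in polynomial time a hypergraph $H=(V,\mathcal{E})$ of bounded maximum degree, with hyperedges of size at most $3$, and two acyclic orientations $h_1, h_2$ of $H$ such that the flip distance between $h_1$ and $h_2$ equals $\alpha\,\tau(G) + \beta\,|E_G|$ for some absolute constants $\alpha,\beta>0$, where $\tau(G)$ denotes the vertex cover number of $G$. Since $\tau(G)=\Theta(|V_G|)=\Theta(|E_G|)$ on cubic graphs, this linear correspondence preserves the APX-hardness gap.

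The hypergraph $H$ would be assembled from two families of constant-size gadgets. For every $v\in V_G$, a \emph{selector gadget} $\Gamma_v$ is attached, consisting of $O(1)$ auxiliary vertices and $O(1)$ hyperedges of size $3$, that admits two canonical acyclic orientations differing by a single ``activation'' flip, interpreted as putting $v$ into a cover. For every $e=\{u,w\}\in E_G$, an \emph{edge gadget} $\Gamma_e$ is introduced, sharing a few vertices with $\Gamma_u$ and $\Gamma_w$, and designed so that the restrictions of $h_1$ and $h_2$ to $\Gamma_e$ can be reconciled with exactly $\beta$ flips if at least one of $\Gamma_u, \Gamma_w$ has been activated, but require strictly more flips otherwise. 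Because $G$ is cubic, each selector gadget touches at most three edge gadgets, so the maximum degree of $H$ remains bounded by an absolute constant $\Delta$.

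Correctness then has two directions. For completeness, given a vertex cover $C\subseteq V_G$ of size $k$, an explicit flip sequence of length $\alpha k + \beta|E_G|$ transforming $h_1$ into $h_2$ can be produced by activating each $\Gamma_v$ with $v\in C$ in the order given by a topological sort of the underlying DAG (to maintain acyclicity of intermediate orientations) and then reconciling each edge gadget locally using its $\beta$ ``cheap'' flips. For soundness, given any flip sequence $\Sigma$ of length $L$ from $h_1$ to $h_2$, define $C\subseteq V_G$ as the set of vertices whose selector gadget was activated during $\Sigma$; a local charging argument per edge gadget shows that $C$ must cover every $e\in E_G$, and that $L\geq \alpha|C| + \beta|E_G|$. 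Extracting $|C|\leq (L-\beta|E_G|)/\alpha$ and combining with the inapproximability of MIN-VC-$3$ produces a constant inapproximability gap for Problem~\ref{prob:hyperdistance}.

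The main obstacle will be making the edge gadgets rigid enough that soundness is tight: by Theorem~\ref{thm:adjacency} a single flip $\{x,y\}$ simultaneously reorients \emph{every} hyperedge of $H$ containing both $x$ and $y$ whose current head is $x$, so we must carefully choose the shared auxiliary vertices so that no two gadgets can be ``reconciled for free'' by exploiting such collateral effects, while keeping the maximum degree of $H$ bounded. A secondary issue is guaranteeing that all intermediate orientations along the flip sequence remain acyclic; this should be handled by organizing the gadgets into a global DAG structure in which flips internal to one gadget can never create directed cycles crossing gadget boundaries. A plausible alternative route would be an analogous $L$-reduction from MAX-E$3$-SAT$(B)$, in which case each clause is encoded by an edge gadget and the selector gadgets play the role of variable assignments; the choice between these two routes depends mainly on which yields cleaner bookkeeping for the acyclicity constraint.
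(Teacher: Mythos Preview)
Your overall strategy---an $L$-reduction from bounded-degree Minimum Vertex Cover via per-vertex ``selector'' gadgets and per-edge gadgets, yielding a flip distance that is an affine function of $\tau(G)$---is exactly the route the paper takes. In the paper's construction each vertex $v$ becomes a single size-$2$ edge $\{v^-,v^+\}$ (so activation costs one flip, later undone, for a total of $2$), and each edge $e$ of $G$ carries a seven-vertex ``star gadget'' wired to the two vertex-edges via a short path; the flip distance comes out to exactly $2\tau(G)+12|E_G|$.

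That said, what you have written is a plan, not a proof, and the missing piece is precisely where all the work lies. The entire technical content of the reduction is the construction of an edge gadget with hyperedges of size $\le 3$ and two distinguished acyclic orientations whose flip distance is strictly smaller when a designated ``interface'' vertex is allowed to participate in flips than when it is not. The paper's gadget is not obvious: it is a center $0$ with six leaves, equipped with all edges $\{0,i\}$ and all triples $\{0,i,j\}$; the two target orientations encode the identity and the reverse permutation on the leaves. The flip distance between them is $12$ if one may flip through the center (push each leaf below $0$ and then back up), but $\binom{6}{2}=15$ if one may not (one is then forced to bubble-sort). This $12$-versus-$15$ dichotomy, exploiting that stellohedra fail the non-leaving-face property, is the heart of the argument; your proposal gives no indication of how to produce such a gadget.

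Your soundness sketch also contains a genuine gap. You define $C$ as the set of vertices whose selector was activated during an arbitrary flip sequence $\Sigma$ and assert that $C$ must cover every edge of $G$. This need not hold: nothing prevents $\Sigma$ from simply paying the expensive cost on some edge gadget $\Gamma_e$ while never activating either endpoint of $e$. The paper's argument accounts for this: it partitions $E_G$ into the set $E_1$ of edges whose gadget was never ``unlocked'' (each costing at least $15$ flips) and the remainder $E_2$, shows $\tau(G)\le r+|E_1|$ where $r$ counts the extra off-gadget flips (each counted at least twice since those edges must return to their original orientation), and then closes the arithmetic via $15|E_1|+12|E_2|+2r\ge 2\tau(G)+12m$. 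Your charging argument would have to be refined along these lines before it works.
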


The proof of Theorem~\ref{thm:hard2} is given in Section~\ref{sec:hard2}.
Since the input hypergraph $H$ allows to compute the submodular function associated with $H$ efficiently, Theorem~\ref{thm:hard2}
implies the following.
\begin{corollary}
There exists a constant $\varepsilon >0$ such that Problem~\ref{pb:spgp} cannot be approximated to within a factor $(1+\varepsilon)$ in polynomial time, unless $\P=\NP$.    
\end{corollary}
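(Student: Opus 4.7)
The plan is to derive the corollary directly from Theorem~\ref{thm:hard2} by viewing the flip distance problem on hypergraphic polytopes as a special case of Problem~\ref{pb:spgp}. The key observation is that in Theorem~\ref{thm:hard2} the hypergraph $H=(V,\mathcal{E})$ is given explicitly, so the submodular function $f_H$ defined by $f_H(U) = |\{e\in\mathcal{E} : U\cap e \neq \emptyset\}|$ can be evaluated in polynomial time for any $U \subseteq V$ simply by iterating over $\mathcal{E}$. This yields an efficient oracle, hence a polynomial-time reduction from Problem~\ref{prob:hyperdistance} to Problem~\ref{pb:spgp}.

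First, I would take an instance of Problem~\ref{prob:hyperdistance} given by a hypergraph $H$ (of maximum degree at most $\Delta$ and hyperedge size at most $3$) together with two acyclic orientations $h_1, h_2$. I would then produce the corresponding instance of Problem~\ref{pb:spgp} consisting of the oracle for $f_H$ and the two vertices $\mathbf{a}, \mathbf{b}$ of $P_{f_H} = P_H$ associated with $h_1$ and $h_2$ (these can be read off in polynomial time from the orientations, as the coordinates of the vertex associated with an acyclic orientation $h$ are determined by the head counts per vertex).

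Second, Theorem~\ref{thm:adjacency} guarantees that the skeleton of $P_H$, restricted to the vertices $\mathbf{a}, \mathbf{b}$, has graph distance exactly equal to the flip distance between $h_1$ and $h_2$. Hence the reduction preserves the objective value on the nose, so any polynomial-time $(1+\varepsilon)$-approximation for Problem~\ref{pb:spgp} would immediately yield a polynomial-time $(1+\varepsilon)$-approximation for Problem~\ref{prob:hyperdistance}. Choosing $\varepsilon > 0$ to be the APX-hardness threshold supplied by Theorem~\ref{thm:hard2} then gives the conclusion, assuming $\P \neq \NP$.

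There is no genuine obstacle in this derivation; the only subtlety to check is that the oracle for $f_H$ can indeed be implemented efficiently from the explicit description of $H$, which is immediate from the formula~\eqref{f:hypergraph}, and that the vertex/orientation correspondence is polynomially computable in both directions, which is standard for hypergraphic polytopes.
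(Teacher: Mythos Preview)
Your proposal is correct and matches the paper's own justification, which is simply the one-line remark preceding the corollary: since the input hypergraph $H$ is given explicitly, the submodular function $f_H$ can be evaluated efficiently, so the \APX-hardness of Problem~\ref{prob:hyperdistance} from Theorem~\ref{thm:hard2} transfers directly to Problem~\ref{pb:spgp}. Your additional observations about computing the vertices $\mathbf{a},\mathbf{b}$ from $h_1,h_2$ and invoking Theorem~\ref{thm:adjacency} for the objective-preserving correspondence are the natural details that the paper leaves implicit.
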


On the positive side, we prove that shortest paths in the skeleton of $P_H$ can be approximated efficiently up to a multiplicative factor that is bounded by the \emph{maximum codegree} in~$H$. Given a hypergraph $H=(V,\mathcal{E})$ and a pair $\{u, v\}$ of distinct vertices, the \emph{codegree} of $\{u,v\}$ is defined as $$\text{codeg}(\{u,v\}):=|\{e \in \mathcal{E}|\{u,v\}\subseteq e\}|,$$ and the maximum codegree of $H$ is defined as $$\Delta_2(H):=\max\{\text{codeg}(\{u,v\})|u\neq v \in V\}.$$

\begin{theorem}\label{thm:approx}
There exists a polynomial-time algorithm that, given as input a hypergraph $H$ and two acyclic orientations $h_1, h_2$ of $H$, outputs a flip sequence $\{u_1,v_1\},\ldots,\{u_\ell,v_\ell\}$ from $h_1$ to $h_2$ such that $\ell \le \Delta_2(H)\cdot\text{dist}(h_1,h_2)$, where $\text{dist}(h_1,h_2)$ denotes the length of a shortest flip sequence from $h_1$ to $h_2$. 
\end{theorem}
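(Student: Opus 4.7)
The plan is to design a greedy algorithm that, starting from $h_1$, repeatedly performs a carefully chosen flip so as to (a) strictly decrease the number of misoriented hyperedges $|\mathcal{E}_{\neq}| := |\{e \in \mathcal{E} : h_{\mathrm{curr}}(e) \neq h_2(e)\}|$ by at least one, while (b) preserving acyclicity of $h_{\mathrm{curr}}$. The matching lower bound relies on the observation that any single flip at $\{u,v\}$ alters the head of at most $\text{codeg}(\{u,v\}) \le \Delta_2(H)$ hyperedges, and each $e$ in the initial $\mathcal{E}_{\neq}(h_1, h_2)$ must have its head modified at least once along any flip sequence from $h_1$ to $h_2$. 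Hence $\text{dist}(h_1, h_2) \ge |\mathcal{E}_{\neq}(h_1, h_2)|/\Delta_2(H)$, so any flip sequence of length at most $|\mathcal{E}_{\neq}(h_1, h_2)|$ is automatically a $\Delta_2(H)$-approximation.

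To drive the greedy algorithm, I would fix a topological ordering $\sigma_2 = (v_1,\ldots,v_n)$ of $D_{h_2}$, so that $h_2(e) = \max_{\sigma_2}(e)$ for every hyperedge $e$. At each iteration let $v_i$ be the largest index such that some hyperedge with $h_2$-head $v_i$ is still misoriented under $h_{\mathrm{curr}}$. The algorithm maintains the invariant that every hyperedge $e$ with $h_2(e) \in \{v_{i+1},\ldots,v_n\}$ already satisfies $h_{\mathrm{curr}}(e) = h_2(e)$. Letting $A_i$ be the set of current heads of the misoriented hyperedges with target $v_i$, this invariant forces $A_i \subseteq \{v_1,\ldots,v_{i-1}\}$. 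The key observation is that for any $u \in A_i$, flipping $(u,v_i)$ affects only hyperedges $e'$ with $\{u,v_i\} \subseteq e'$ and $h_{\mathrm{curr}}(e') = u$; the invariant (which forbids $h_2(e') \in \{v_{i+1},\ldots,v_n\}$) together with $v_i \in e'$ (which forces $h_2(e') \ge_{\sigma_2} v_i$) pins down $h_2(e') = v_i$. Therefore the flip correctly fixes every affected hyperedge without creating new misorientations, establishing (a) and preserving the invariant.

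The main obstacle is ensuring acyclicity, property (b): a careless choice of $u \in A_i$ can create a cycle, because a directed path of length at least two from $v_i$ to $u$ in $D_{h_{\mathrm{curr}}}$ closes into a cycle once the direct arc $v_i \to u$ is reversed. To avoid this, I would select $u^* \in A_i$ to be \emph{reachability-minimal} in $D_{h_{\mathrm{curr}}}$ among the elements of $A_i$; such a choice exists because reachability in the acyclic digraph $D_{h_{\mathrm{curr}}}$ is a partial order. Acyclicity of the resulting orientation is then proved by contradiction: a hypothetical simple cycle in the new digraph must traverse a newly created arc into $v_i$ coming from an affected hyperedge $e'$, and tracing the cycle backwards along unchanged arcs yields a directed path in $D_{h_{\mathrm{curr}}}$ from some out-neighbor $w_1$ of $v_i$ all the way to $u^*$ (using at the end the arc $w_k \to u^*$ contributed by $e'$). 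The invariant forbids $w_1 \in \{v_{i+1},\ldots,v_n\}$ (since such a vertex's reachable set in $D_{h_{\mathrm{curr}}}$ remains confined to $\{v_{i+1},\ldots,v_n\}$), and the absence of any $v_i \to u^*$ arc in the new digraph forces $w_1 \ne u^*$. Hence $w_1 \in A_i \setminus \{u^*\}$, contradicting the reachability-minimality of $u^*$.

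Finally, since each iteration strictly decreases $|\mathcal{E}_{\neq}|$ by at least one, the algorithm outputs a flip sequence of total length at most $|\mathcal{E}_{\neq}(h_1,h_2)| \le \Delta_2(H) \cdot \text{dist}(h_1,h_2)$. Each iteration --- identifying $v_i$ and $A_i$ by inspecting current heads, finding a reachability-minimal element of $A_i$ by depth-first search in $D_{h_{\mathrm{curr}}}$, and updating the orientation --- runs in polynomial time, so the overall algorithm is polynomial-time.
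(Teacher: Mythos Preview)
Your proof is correct and takes a genuinely different route from the paper's. The paper establishes the same upper bound $\ell \le |\{e \in \mathcal{E}: h_1(e)\neq h_2(e)\}|$ via a non-constructive key lemma (Lemma~\ref{lemma:flippable}): for any two distinct acyclic orientations $h_1,h_2$ there exists a hyperedge $e$ with $h_1(e)\neq h_2(e)$ such that $(h_1(e),h_2(e))$ is flippable in $h_1$. Its proof is somewhat involved, going through the transitive reduction of $D_{h_1}$, building an infinite sequence of hyperedges in $\mathcal{E}'$, and deriving a contradiction with the acyclicity of $D_{h_2}$; the resulting algorithm then scans all hyperedges to locate one witnessing the lemma. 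Your argument is instead fully constructive: by fixing a topological order $\sigma_2$ of $D_{h_2}$ and processing targets $v_i$ from the top down, you exhibit the next flip directly as a reachability-minimal $u^\ast\in A_i$, and acyclicity follows from a short case analysis. The lower-bound half (each flip changes at most $\Delta_2(H)$ heads) is identical in both approaches. Your version is more elementary and yields a more explicit algorithm; the paper's lemma, in exchange, is a clean standalone statement of independent interest. Two small points worth tightening in your write-up: (i) the step ``$w_1\in A_i$'' needs one more sentence---the hyperedge $f$ witnessing the arc $(v_i,w_1)$ in $D_{h_{\mathrm{curr}}}$ contains $v_i$, hence $h_2(f)\ge_{\sigma_2} v_i$, and the invariant together with $w_1\notin\{v_{i+1},\ldots,v_n\}$ forces $h_2(f)=v_i$; (ii) when $w_k=u^\ast$ the path already ends at $u^\ast$ and no extra arc $w_k\to u^\ast$ is needed.
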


In particular, for \emph{linear hypergraphs}, these are the hypergraphs $H=(V,\mathcal{E})$ satisfying $|e_1 \cap e_2|\le 1$ for every distinct $e_1, e_2 \in \mathcal{E}$, we obtain an efficient precise algorithm for computing a shortest flip sequence between two acylic orientations. 

\begin{corollary}
    There exists a polynomial-time algorithm that, given as input a linear hypergraph $H$ and two acyclic orientations $h_1, h_2$ of $H$, outputs a shortest flip-sequence from $h_1$ to $h_2$.
\end{corollary}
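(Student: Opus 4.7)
The plan is to derive the corollary as an immediate consequence of Theorem~\ref{thm:approx}, by verifying that the maximum codegree $\Delta_2(H)$ of a linear hypergraph is always at most $1$, so that the approximation factor collapses to the exact value.

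First I would recall the definition: a hypergraph $H=(V,\mathcal{E})$ is linear if $|e_1 \cap e_2| \le 1$ for every pair of distinct hyperedges $e_1, e_2 \in \mathcal{E}$. Suppose for contradiction that some unordered pair $\{u,v\}$ of distinct vertices satisfies $\text{codeg}(\{u,v\}) \ge 2$. Then there exist two distinct hyperedges $e_1, e_2 \in \mathcal{E}$ with $\{u,v\} \subseteq e_1 \cap e_2$, which yields $|e_1 \cap e_2| \ge 2$, contradicting linearity. Hence $\text{codeg}(\{u,v\}) \le 1$ for every pair $\{u,v\}$, and consequently $\Delta_2(H) \le 1$.

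Now I would plug this bound into Theorem~\ref{thm:approx}. The algorithm from that theorem, run on input $(H,h_1,h_2)$, outputs in polynomial time a flip sequence from $h_1$ to $h_2$ of length
\[
\ell \;\le\; \Delta_2(H)\cdot \text{dist}(h_1,h_2) \;\le\; \text{dist}(h_1,h_2).
\]
Since by definition $\text{dist}(h_1,h_2)$ is the minimum possible length of such a flip sequence, this forces $\ell = \text{dist}(h_1,h_2)$, so the returned sequence is a shortest flip-sequence from $h_1$ to $h_2$. This yields the claimed polynomial-time exact algorithm.

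There is essentially no technical obstacle here: all of the real work is already contained in Theorem~\ref{thm:approx}; the only new observation required is the one-line combinatorial fact that linearity of $H$ forces each pair of vertices to lie in at most one hyperedge, so that the codegree-based approximation factor becomes trivial.
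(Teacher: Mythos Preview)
Your proposal is correct and matches the paper's approach exactly: the corollary is presented in the paper as an immediate consequence of Theorem~\ref{thm:approx}, using precisely the observation that linearity forces $\Delta_2(H)\le 1$, so the approximation guarantee becomes an exact one.
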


The proof of Theorem~\ref{thm:approx} is given in Section~\ref{sec:approx}.

\section{Proof of Theorem~\ref{thm:hard2}}
\label{sec:hard2}

Our proof describes a reduction from the minimum vertex cover problem in graphs, which is well-known to be \APX-hard. 
Recall that a vertex cover of a graph $G=(V,E)$ is a subset 
 $X\subseteq V$ such that every edge $e\in E$ has at least one endpoint in $X$.

We use the following classical hardness result due to H\r{a}stad~\cite{H01}. While H\r{a}stad only stated hardness of approximation on general graphs, his result in fact applies to bounded-degree graphs, see for instance~\cite{DS05}, page~449 for an explanation why we can restrict Theorem~\ref{thm:vertexcoverhardness} to the bounded-degree case. 

\begin{theorem}[\cite{H01}]\label{thm:vertexcoverhardness}
There exists an absolute constant $\Delta>0$ such that the following holds. Unless \P$=$\NP, there exists no polynomial-time algorithm that achieves an approximation ratio of less than $7/6$ for the following minimization problem.

\noindent\textbf{Input:} A graph $G=(V,E)$ of maximum degree at most $\Delta$.

\noindent\textbf{Output:} The size $|X|$ of a smallest vertex cover $X\subseteq V$ of $G$.
\end{theorem}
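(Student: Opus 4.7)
The plan is to prove this classical result of H\r{a}stad by combining the PCP-based inapproximability of a linear constraint-satisfaction problem with a gap-preserving reduction to vertex cover that also respects a constant degree bound. The natural starting point is H\r{a}stad's own optimal inapproximability theorem for MAX-E3-LIN-2: for every $\varepsilon>0$, it is \NP-hard to distinguish systems of three-variable $\mathbb{F}_2$-linear equations that are $(1-\varepsilon)$-satisfiable from those at most $(1/2+\varepsilon)$-satisfiable. This itself is proved through a long-code based 3-query PCP whose soundness is analyzed using Fourier analysis on the Boolean hypercube, and is the deepest ingredient of the whole chain; I would treat it as the input to the combinatorial reductions that follow.

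From this starting point I would run a two-step combinatorial reduction. First, expand each linear equation $x_i \oplus x_j \oplus x_k = b$ into the four equivalent 3-CNF clauses, which preserves the gap and lands in MAX-E3-SAT at the tight $7/8$ satisfiability threshold. Second, apply the FGLSS-style reduction: build a graph $G_\varphi$ whose vertices are the $3m$ literal-occurrences of the $m$ clauses, place a triangle on each clause, and join any two occurrences of complementary literals by an edge. Satisfying assignments correspond to independent sets, minimum vertex cover and maximum independent set are complementary, and counting carefully how many vertices each clause-triangle forces into any cover in the completeness versus the soundness case converts the $7/8$ gap for MAX-E3-SAT into a $7/6$ gap for vertex cover. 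Bounded degree is then secured by inserting an expander-replacement gadget before the FGLSS step: each variable is replaced by a cloud of Boolean copies, one per occurrence, connected by the edges of a constant-degree expander that enforces equality between adjacent copies; standard expander-mixing shows that the soundness gap survives this modification up to $o(1)$, while each variable now occurs at most $B$ times, so every vertex of $G_\varphi$ has degree at most $2 + 2B$, giving the claimed absolute constant $\Delta$.

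The main obstacle, and the reason this proof is anything but routine, is pinning down the precise constant $7/6$ rather than some weaker unspecified factor. A generic PCP combined with FGLSS yields only \emph{some} constant gap; the $7/6$ threshold is tight only because H\r{a}stad's long-code test for 3-LIN is tight, with $1/2$ being the correct soundness bound provable by Fourier analysis of a specific noise operator. I would therefore carefully track the soundness and completeness parameters through every gadget, verifying in particular that the expander-replacement step loses only $o(1)$ in the soundness gap and that the counting in the final triangle construction is sharp enough to reach the $7/6$ constant claimed. A secondary subtlety is ensuring that the expander used in the degree-reduction has its expansion parameter matched to the soundness error of the starting PCP, so that neither the degree bound nor the gap collapses in the composition.
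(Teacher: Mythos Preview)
The paper does not prove this theorem at all: it is quoted as a known result of H\r{a}stad, with a pointer to Dinur--Safra for the bounded-degree restriction, and is used as a black box in the reduction that follows. So there is no ``paper's proof'' to compare against; your proposal is an attempt to reconstruct the original argument.

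Your overall architecture (H\r{a}stad's PCP for E3-LIN-2, then FGLSS, then expander replacement for bounded degree) is the right one, but the specific reduction chain you describe does \emph{not} yield the constant $7/6$. Routing through MAX-E3-SAT and then applying the triangle-per-clause reduction costs you: from a $(1-\varepsilon)$ vs.\ $(7/8+\varepsilon)$ gap on $M$ clauses you get a graph on $3M$ vertices with vertex cover $\approx 2M$ in the YES case and $\geq (17/8)M$ in the NO case, a ratio of only $17/16$. The $7/6$ constant comes from applying FGLSS \emph{directly} to E3-LIN-2: each equation has four satisfying partial assignments, so you build a $K_4$ per equation (not a triangle per clause), giving $4m$ vertices; the $(1-\varepsilon)$ vs.\ $(1/2+\varepsilon)$ gap on independent sets then translates to vertex covers of size $\approx 3m$ vs.\ $\approx (7/2)m$, and $(7/2)/3 = 7/6$. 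The detour through E3-SAT is precisely what dilutes the gap, because an unsatisfied linear equation still satisfies three of its four derived clauses.

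Your treatment of the bounded-degree issue via expander replacement is standard and correct in spirit; the point is simply that the degree-reduction step should be applied to the E3-LIN-2 instance (bounding variable occurrences) before the direct $K_4$-FGLSS construction, not to an intermediate E3-SAT instance.
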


Before proceeding to the proof of Theorem~\ref{thm:hard2}, we describe a hypergraph that will be used as a gadget in our reduction. This gadget exploits the fact that stellohedra do not have the non-leaving face property~\cite{MP14,CP16}. The way this gadget is used in the reduction is inspired by previous reductions for the flip distance problem in geometric triangulations~\cite{AMP15,LP15}.

The \emph{star gadget} is the hypergraph $S=(V,\mathcal{E})$, where $V=\{0,1,\ldots ,6\}$ and $\mathcal{E}=\{ \{0, i\} : i\in [6]\} \cup \{\{0,i,j\} : i,j\in [6], i\not= j \}$. 
Hence it is composed of a star graph, together with all triples of vertices containing exactly two leaves of the star. The vertex 0 will be called the \emph{center} of the star gadget, and the vertices in $[6]$ will be referred to as the \emph{leaves}.

We define two orientations $h_1$ and $h_2$ of the star gadget. In both orientations, all hyperedges of the form $e=\{0,i\}$ are such that $h_1(e)=h_2(e)=i$. On the other hand, for hyperedges of the form $e=\{0,i,j\}$, we let $h_1(e)=\min \{i, j\}$, and $h_2(e)=\max \{i, j\}$. 

\begin{figure}
    \centering
    \includegraphics[page=5,width=\textwidth]{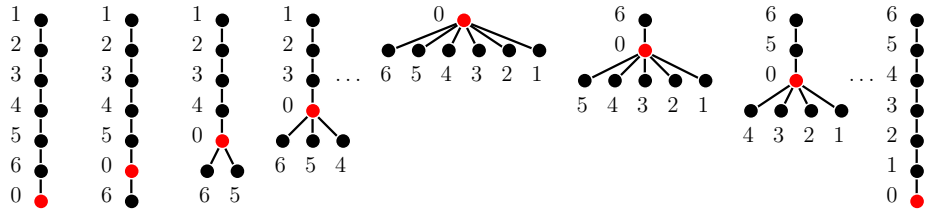}
    \caption{Twelve flips are sufficient to transform $h_1$ (left) into $h_2$ (right). The transitive reductions of the orientations are shown at each step, each of which is a broom, composed of a handle -- a chain above vertex 0 -- and a collection of independent vertices below 0.}
    \label{fig:star}
\end{figure}

\begin{lemma}
\label{lem:gadget}
    The flip distance between the two orientations $h_1$ and $h_2$ of the star gadget is 12.
    However, any sequence of flips between $h_1$ and $h_2$ that does not involve the vertex 0 has length at least 15.
\end{lemma}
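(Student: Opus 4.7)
My plan has three components: an upper bound construction, a matching lower bound of $12$ via an acyclicity-aware counting argument, and the lower bound of $15$ for flip sequences avoiding vertex~$0$.

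For the upper bound, I would exhibit the explicit twelve-step sequence: first perform the flips $(i,0)$ for $i=6,5,4,3,2,1$ in this order, and then perform the flips $(0,j)$ for $j=6,5,4,3,2,1$ in this order. After the first six flips every hyperedge of the star gadget has head $0$, so the induced digraph $D$ is the in-star at $0$, which is trivially acyclic. Acyclicity of each intermediate orientation during the first phase is also straightforward: after $(k,0)$ has been performed for $k=6,5,\ldots,i_0$, the transitive reduction is a broom with the leaves $i_0,\ldots,6$ above vertex~$0$ (as bristles) and a descending chain of the remaining leaves below~$0$. The second phase is symmetric, and after the full sequence every edge $\{0,i\}$ has head $i$ and every triple $\{0,i,j\}$ has head $\max(i,j)$, matching $h_2$.

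For the lower bound of $12$, classify the flips in any valid sequence into Type~A flips $(i,j)$ with $i,j\neq 0$, Type~B flips $(i,0)$, and Type~C flips $(0,j)$, with counts $n_A,n_B,n_C$. Let $S:=\{i:(i,0)\text{ appears in the sequence}\}$ and $T:=\{j:(0,j)\text{ appears}\}$, so $n_B\ge|S|$ and $n_C\ge|T|$. Two constraints drive the bound. First, every edge $\{0,i\}$ returns to head $i$ in $h_2$; if $i\in S$, the first occurrence of $(i,0)$ toggles its head to $0$, and a subsequent $(0,i)$ is required to restore it, giving $S\subseteq T$. Second, performing $(i,0)$ with $i<6$ creates arcs $k\to 0$ from every triple $\{0,i,k\}$ currently at head $i$; combined with the arc $0\to k$ coming from $\{0,k\}$ if this edge still has head $k$, this yields a two-cycle. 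Therefore, for each such $k$, either $(k,0)$ was performed earlier (so $k\in S$), or the triple $\{0,i,k\}$ was flipped by a Type~A flip $(i,k)$ beforehand. A case split on whether $6\in S$ then yields the bound. If $6\in S$, the Type~B flips can be ordered $(6,0),(5,0),\ldots,(i_0,0)$ without any Type~A preps, and the count gives $n_A+n_B+n_C\ge |S|+|T|\ge 12$, with equality iff $S=T=[6]$. If $6\notin S$, then for every $i\in S$ the triple $\{0,i,6\}$ must be flipped by a Type~A prep flip $(i,6)$ (which coincidentally also brings it to its target head), and a careful enumeration parametrized by $|S|$ shows that the total always exceeds $12$ in this case (in fact, it is at least $15$).

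For the bound without vertex~$0$, note that every leaf-to-leaf flip $(i,j)$ affects exactly one hyperedge of the star gadget, namely the triple $\{0,i,j\}$, which is the unique hyperedge containing both leaves $i$ and $j$. Since $h_1$ and $h_2$ differ on all $15$ triples, at least $15$ such flips are required.

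The main obstacle is the detailed acyclicity-aware case analysis in the $\ge 12$ lower bound: one must rigorously rule out savings from every alternative strategy where $6\notin S$ or $|S|<6$, verifying that the Type~A preps forced by acyclicity, together with the Type~A flips needed for triples not going through $0$, always bring the count to at least $12$.
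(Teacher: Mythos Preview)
Your upper-bound construction is correct and coincides with the paper's twelve-flip sequence. Your argument for the $\ge 15$ bound is also correct, and in fact simpler than the paper's: you observe that a leaf--leaf flip $\{i,j\}$ changes exactly one hyperedge (the unique triple $\{0,i,j\}$ containing both leaves), and all $\binom{6}{2}=15$ triples differ between $h_1$ and $h_2$. The paper instead invokes the broom structure to argue that leaf--leaf flips are precisely adjacent transpositions in the handle, so reversing the length-$6$ handle costs $\binom{6}{2}$ flips. Both arguments are valid; yours avoids the broom description entirely for this part.

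The notable point of comparison is the lower bound of $12$. The paper's own proof does \emph{not} establish it: after giving the broom description, the $12$-flip sequence, and the $\ge 15$ bound for vertex-$0$-free sequences, the proof ends. So in attempting the $\ge 12$ bound you are going beyond what the paper supplies. That said, your sketch for this part is genuinely incomplete, and the gap you flag at the end is real and not merely a matter of writing out cases. Concretely: your case split on $6\in S$ versus $6\notin S$ does not handle configurations with $6\in S$ but $|S|<6$; the inequality $n_A+n_B+n_C\ge |S|+|T|$ only reaches $12$ when $|S|=|T|=6$; and the sentence ``the Type~B flips can be ordered $(6,0),(5,0),\ldots$ without any Type~A preps'' conflates the existence of a nice ordering with the task of lower-bounding an \emph{arbitrary} given sequence. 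In the broom model the moves are exactly: pop the front of the handle, push any bristle to the front, or adjacent-swap in the handle; showing that reversing the length-$6$ handle needs at least $12$ such moves requires a sharper potential or counting argument than the one you outline, since the crude bound $2p+\binom{6-p}{2}$ (with $p$ the number of popped elements) drops as low as $9$.
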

\begin{proof}
    We observe that the acyclic orientations of the star gadget $S$ are one-to-one with \emph{brooms}, as pictured in Figure~\ref{fig:star}. Given an acyclic orientation $h$, let $W\subseteq [6]$ be the set of vertices $i$ whose edge $e=\{0,i\}$ is such that $h(e)=i$. Consider the transitive reduction $D$ of the digraph $D_h$. Then $D$ consists of two parts: A directed path, that we call the \emph{handle}, originating from 0 and involving vertices of $W$, and a directed star of center 0 with leaves consisting the vertices not in $W$, all pointing to 0. Indeed, the hyperedges of the form $e=\{0,i,j\}$ induce a total order on $W$, while the other vertices are incomparable in $D$.

    The orientations $h_1$ and $h_2$ are both such that the set $W$ contains all 6 vertices, but the order in the handle is reversed. There is a sequence of 12 flips that transforms $h_1$ into $h_2$. It consists of iteratively removing each of the 6 vertices from the handle of the broom, by performing the 6 flips $\{0,6\}, \{0,5\},\ldots ,\{0,1\}$, then the same flips again, in the same order: $\{0,6\}, \{0,5\},\ldots ,\{0,1\}$ (see Figure~\ref{fig:star}).

    On the other hand, if we are not allowed to flip any pair of the form $\{0,i\}$, we can only change the order of two adjacent vertices in the handle, by flipping a pair $\{i,j\}$ with $i,j>0$. This amounts to transform the identity permutation on 6 elements into the reverse identity permutation by adjacent transpositions. Since all pairs are reversed, this costs ${6\choose 2}=15$ flips.\qed
 \end{proof}

\begin{figure}
    \centering
    \includegraphics[page=1, scale=.8]{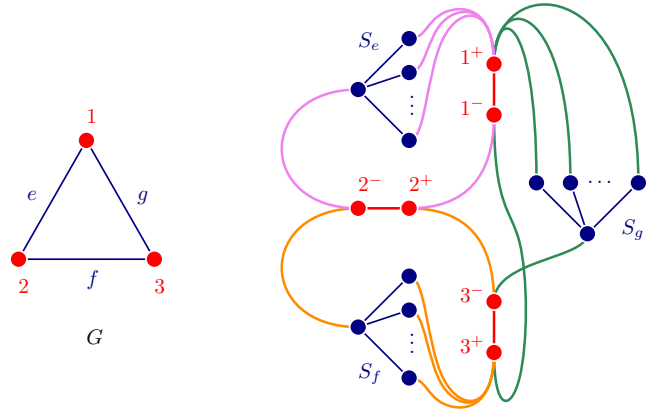}
    \caption{Illustration of the proof of Theorem~\ref{thm:hard2}. Given an instance $G$ of the vertex cover problem on the left, we construct the hypergraph $H$ on the right (we omit the depiction of the 3-edges within each of the star gadgets).}
    \label{fig:reduction}
\end{figure}

\begin{proof}[Theorem~\ref{thm:hard2}]
Consider an instance $G=(V,E)$ of the minimum vertex cover problem, with $|V|=n$ and $|E|=m$ and maximum degree bounded by a absolute constant $\Delta$.
We construct a hypergraph $H=(V',\mathcal{E})$ as follows.
First, we choose and fix some linear ordering $(V,\prec)$ of the vertex-set of $G$. Next, for each vertex $v\in V$, we construct two vertices $v^-$ and $v^+$ in $V'$, and connect them by an edge $\{v^-,v^+\}$ in $\mathcal{E}$.
Then, for each edge $e=\{u,v\}\in E$ such that $u\prec v$, we include the following vertices and hyperedges in the hypergraph $H$:
\begin{enumerate}
    \item A copy $S_e$ of the star gadget,
    \item \label{type:center} an edge connecting the center of the star $S_e$ to $u^-$,
    \item \label{type:middle} the edge $\{ u^+, v^-\}$,
    \item \label{type:leaves} 6 edges connecting $v^+$ to each of the leaves of $S_e$.
\end{enumerate}
The hypergraph $H$ has $|V'|=2n+7m$ vertices and $|\mathcal{E}|=n+29m$ hyperedges.
An illustration is given in Figure~\ref{fig:reduction}.

We now consider two orientations $h_1$ and $h_2$ of $H$. 
In both orientations, for each vertex $v\in V$, the edge $\{v^-,v^+\}$ is oriented from $v^-$ to $v^+$, and for each edge $e=\{u,v\}\in E$, the edges of $H$ of type~\ref{type:center} above are oriented from the center of $S_e$ to $u^-$, the edges of type~\ref{type:middle} are oriented from $u^+$ to $v^-$, and the edges of type~\ref{type:leaves} are oriented towards the leaves of $S_e$.
Also, for each star gadget, all hyperedges of size two are oriented towards the leafs.
The only difference between $h_1$ and $h_2$ are the orientations of the hyperedges of size three of the star gadgets. For each such hyperedge of the form $t=\{0,i,j\}$, we set $h_1(e)=\min \{i, j\}$, and $h_2(e)=\max \{i, j\}$.

\begin{claim}
    The orientations $h_1$ and $h_2$ are acyclic.
\end{claim}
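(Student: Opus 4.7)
The plan is to prove acyclicity of both $D_{h_1}$ and $D_{h_2}$ by exhibiting an explicit topological ordering of the vertex set $V'$; since a digraph on a finite vertex set is acyclic if and only if it admits such an ordering, this will suffice.

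The first step is to list all arcs produced by the orientations. Recall that a hyperedge $e$ with head $h(e)$ contributes the arcs $x \to h(e)$ for every $x \in e \setminus \{h(e)\}$. From the definitions of $h_1$ and $h_2$, the arcs of $D_{h_i}$ split into six families: (i)~$v^-\to v^+$ for each $v\in V$; (ii)~$\text{center}(S_e)\to u^-$ for each $e=\{u,v\}\in E$ with $u\prec v$, coming from the type-\ref{type:center} hyperedges; (iii)~$u^+\to v^-$ from the type-\ref{type:middle} hyperedges; (iv)~$v^+\to \ell$ for each leaf $\ell$ of $S_e$, from the type-\ref{type:leaves} hyperedges; (v)~$\text{center}(S_e)\to \ell$ for every leaf $\ell$ of $S_e$, from the size-two hyperedges internal to $S_e$; and (vi)~for each size-three hyperedge $\{0,i,j\}$ inside $S_e$, arcs from $\text{center}(S_e)$ and from $\max\{i,j\}$ to $\min\{i,j\}$ in $h_1$, respectively from $\text{center}(S_e)$ and from $\min\{i,j\}$ to $\max\{i,j\}$ in $h_2$. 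Note that no arc connects vertices of two distinct star gadgets, since distinct copies $S_e$ share no hyperedge in $H$.

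Next I would define the ordering. Writing $V = \{v_1 \prec \cdots \prec v_n\}$, I partition $V'$ into $n$ blocks, where block $k$ consists, in the following order, of: (a)~the centers of all star gadgets $S_e$ with $e = \{v_k, v_l\} \in E$ and $v_k \prec v_l$; (b)~the pair $v_k^-, v_k^+$; and (c)~the six leaves of each star gadget $S_e$ with $e = \{v_l, v_k\} \in E$ and $v_l \prec v_k$, where within each such $S_e$ the leaves are listed in the order $6,5,4,3,2,1$ for the ordering witnessing acyclicity of $D_{h_1}$, and in the order $1,2,3,4,5,6$ for $D_{h_2}$. The linear order on $V'$ is then obtained by concatenating blocks $1, 2, \ldots, n$. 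A quick count confirms that the blocks partition $V'$, since block sizes sum to $2n + m + 6m = 2n + 7m = |V'|$.

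Finally I would verify that every arc goes forward. Arcs of types (i)--(v) follow immediately from the block structure: for each edge $e = \{u,v\} \in E$ with $u \prec v$, the center of $S_e$ lies in block $u$ (before $u^-$), while every leaf of $S_e$ lies in block $v$ (after $v^+$), and the arc $u^+\to v^-$ is forward because block $u$ precedes block $v$. The only arcs that require examining the internal leaf order are those of type (vi): the decreasing list $6,5,\ldots,1$ makes each $h_1$-arc $\max\{i,j\}\to \min\{i,j\}$ forward, while the symmetric increasing list does the job for $h_2$. I do not foresee any substantive obstacle; the entire construction was designed so that this block-based ``time flow'' matches the head assignments of both orientations, and once the arc list and block structure are in hand the verification is a line-by-line check.
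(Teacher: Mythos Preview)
Your proof is correct and follows essentially the same approach as the paper: both exhibit an explicit topological ordering of $V'$ and verify arc-by-arc that every hyperedge's induced arcs go forward. The only cosmetic difference is that the paper uses three global sections (all centers, then all $v^-,v^+$ pairs in $\prec$-order, then all leaves), whereas you interleave these into per-vertex blocks; both orderings work for the same reasons.
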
 

\begin{proof}
To certify that the digraphs $D_{h_1}$ and $D_{h_2}$ corresponding to the orientations $h_1, h_2$ of $H$ are indeed acyclic, we will describe two linear oderings of $V'$ that are \emph{topological orderings}, i.e., all directed edges in $D_{h_1}$ (or, $D_{h_2}$, respectively) have the property that their starting point precedes their endpoint in the ordering.

In both orderings, for $D_{h_1}$ and $D_{h_2}$, we start by listing all the $0$-vertices of all the star-gadgets $S_e$ for all $e \in E$. Note that these vertices form an independent set in both $D_{h_1}$ and $D_{h_2}$. Then, we start by listing the pairs of vertices $v^-,v^+$ (in this order) for all $v \in V$, where we order the pairs according to the linear order $\prec$ that we initially fixed on $V$. The latter makes sure that all the edges directed from $u^+$ to $v^-$ for some edge $\{u,v\} \in E$ with $u\prec v$ are indeed going forward in the linear ordering we define on $V'$. At the end of the ordering, we list all the vertices that are leafs of some star $S_e, e \in E$. While it does not matter in which order we put the leaves of different stars, it is crucial that within each of the stars, we order the leaves in decreasing order (i.e. $6,5,\ldots,1$) for the orientation $D_{h_1}$ and in the opposite order (i.e. $1,2,\ldots,6$) for the orientation $D_{h_2}$. It is not hard to check by going through all directed edges of $D_{h_1}$ and $D_{h_2}$ that both of these orderings for $D_{h_1}$ and $D_{h_2}$ are indeed topological, concluding the proof.\qed
\end{proof}

This finishes the description of our instance of the flip distance problem, composed of $H$ together with its two acyclic orientations $h_1$ and $h_2$.

Let us denote by $x$ the size of a minimum vertex cover of $G$. 
We now prove the following:

\begin{claim}
    The length of a shortest flip sequence between $h_1$ and $h_2$ is equal to
    \[
    2x + 12m.
    \]
\end{claim}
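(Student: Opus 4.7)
The plan is to establish matching upper and lower bounds of $2x + 12m$ on the shortest flip distance.

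For the upper bound, I would construct an explicit flip sequence of length $2x + 12m$ given a minimum vertex cover $X$ of $G$ with $|X| = x$. The sequence has three phases: (i) flip $\{v^-, v^+\}$ for each $v \in X$ ($x$ flips); (ii) for each edge $e = \{u, v\} \in E$, run the 12-flip sequence from Lemma~\ref{lem:gadget} entirely within $S_e$ ($12m$ flips total); (iii) undo phase (i) by flipping each $\{v^-, v^+\}$ back for $v \in X$ ($x$ flips). The key verification is acyclicity throughout. After phase (i) the new digraph admits an explicit topological ordering (place all star-centers first, then the pairs $v^\pm$ with $v^+$ preceding $v^-$ for $v \in X$ and the opposite for $v \notin X$, ordered by $\prec$, then all leaves in the appropriate star-gadget orders). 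For phase (ii), since $X$ covers $e$, WLOG $v \in X$; the flip of $\{v^-, v^+\}$ isolates $v^-$ (it becomes a sink in $H$ restricted to the non-star part), so no path goes from $u^-$ to $v^+$ outside $V(S_e)$, and any intermediate orientation of $S_e$ that is internally acyclic (by Lemma~\ref{lem:gadget}) remains globally acyclic in $H$. Phase (iii) is symmetric to (i).

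For the lower bound, I would use a charging argument on an arbitrary flip sequence $\sigma$ from $h_1$ to $h_2$. First, the sub-sequence of $\sigma$ consisting of flips whose pairs lie in $V(S_e)$ forms a valid flip sequence inside $S_e$ from $h_1|_{S_e}$ to $h_2|_{S_e}$ (since non-internal flips never touch hyperedges contained in $V(S_e)$, and global acyclicity implies acyclicity of the restriction). By Lemma~\ref{lem:gadget} this internal count is $\geq 12$ in general and $\geq 15$ whenever no internal flip at a pair $\{0_e, i\}$ appears. Second, let $Y$ be the set of vertices $v$ such that $\{v^-, v^+\}$ is ever flipped; these edges satisfy $h_1 = h_2$, hence are flipped an even positive number of times, contributing $\geq 2|Y|$ to $|\sigma|$. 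Third, for each edge $e \in E_A$ (where $E_A$ denotes the edges for which some internal flip $\{0_e, i\}$ occurs in $\sigma$), consider the first such flip: it introduces an arc $i \to 0_e$, and acyclicity requires that at least one arc of the cycle $v^+ \to i \to 0_e \to u^- \to u^+ \to v^- \to v^+$ has previously been reversed. Since every edge of this cycle has the same orientation in $h_1$ and $h_2$, the corresponding edge must be flipped at least twice in $\sigma$. Partition $E_A$ into $E_A^Y = \{e \in E_A : e \cap Y \neq \emptyset\}$ and $E_A^Z = E_A \setminus E_A^Y$: for edges in $E_A^Z$ the breaking flip must be of a type 2, 3, or 4 edge, which is \emph{specific to $e$} and disjoint from the type 2/3/4 edges of any other $e' \neq e$, so these contribute an additional $\geq 2|E_A^Z|$ to $|\sigma|$.

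Putting the bounds together with $E_B = E \setminus E_A$,
\[
|\sigma| \;\geq\; 12|E_A| + 15|E_B| + 2|Y| + 2|E_A^Z| \;=\; 12m + 3|E_B| + 2|Y| + 2|E_A^Z|.
\]
The set $Y$ together with an arbitrary endpoint of each edge of $E_A^Z \cup E_B$ is a vertex cover, so $|Y| + |E_A^Z| + |E_B| \geq x$. Substituting gives $|\sigma| \geq 12m + 2x + |E_B| \geq 2x + 12m$, closing the lower bound.

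The main obstacle I anticipate is the lower bound: identifying which edge must carry the "cycle-breaking" flip and arguing that, across distinct $e \in E_A^Z$, these attributions are disjoint, so that the $2|E_A^Z|$ contribution can genuinely be added on top of $2|Y|$ without double counting. The acyclicity verifications in the upper bound are routine but require careful bookkeeping of topological orderings.
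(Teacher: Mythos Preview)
Your proposal is correct and follows essentially the same approach as the paper: the upper bound via the three-phase sequence (flip $\{v^-,v^+\}$ for $v\in X$, do the $12$-flip gadget sequence, flip back) and the lower bound via the $12$/$15$ dichotomy of Lemma~\ref{lem:gadget} combined with a vertex-cover charge on the ``external'' size-two edges are identical in spirit to the paper's argument. The only cosmetic difference is in the lower-bound bookkeeping: the paper lets $R$ be the set of all flipped external size-two edges and partitions $E$ according to whether the outer path of $e$ meets $R$, while you partition $E$ according to whether the center $0_e$ is ever used and then split $E_A$ by coverage from $Y$; both decompositions yield the same inequality $\ell\ge 2x+12m$.
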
    
\begin{proof}
    We first show that there exists a flip sequence transforming $h_1$ into $h_2$, of length at most $2x+12m$. First observe that all edges of $H$ of the form $\{v^-,v^+\}$ can be flipped independently, without affecting acyclicity (indeed, in the topological ordering of $V'$ for $D_{h_1}$ described in the proof of the previous claim, one may just flip the order in which $v^-$ and $v^+$ occur for every $v \in X$, and the arising ordering is still topological). Now pick a minimum vertex cover $X\subseteq V$ of $G$, and flip all edges of the form $\{v^+,v^-\}, v\in X$. This needs $x$ flips. 
    
    We now observe that after these flips, no directed path in the associated digraph leaves and goes back to a star gadget. 
    Indeed, consider any star gadget $S_e$ corresponding to an edge $e=uw$ of $G$ with $u\prec w$. Then $u \in X$ or $w \in X$. This means that in the orientation of $H$ obtained after flipping all pairs $\{v^+,v^-\}, v \in X$ we have that $u^-$ is a sink (if $u \in X$) or $w^+$ is a source (if $w \in X$). In the first case, every directed path leaving $S_e$ at its center must end already at $u^-$ and thus cannot return to $S_e$, and in the second case, every directed path entering $S_e$ at one of its leaves must begin already at $w^+$ and thhus could not have started in $S_e$.

    Hence all star gadgets can now be flipped independently from the rest of the hypergraph, without creating new directed cycles in the digraphs corresponding to the orientations.
    We next flip all star gadgets using exactly 12 flips per gadget, as described in Lemma~\ref{lem:gadget}. Finally, we can flip back all edges $\{v^-,v^+\}$ for $v\in X$. By the same arguments as above, these flips do not affect acyclicity.
    The total length of the described flip sequence is $2x+12m$.

    We now show that any flip sequence transforming $h_1$ into $h_2$ must have length at least $2x+12m$. Consider such a flip sequence of length $\ell$, and let $R$ be the set of edges $ab$ of size two of $H$, excluding the edges of the star gadgets, such that the pair $\{a,b\}$ occurs in the flip sequence from $h_1$ and $h_2$. Since the orientations $h_1$ and $h_2$ agree on these edges, and since they are not included in any larger hyperedges of $H$, the flip sequence must flip each such pair at least twice. So in total the flip sequence involves at least $2|R|$ such flips. We let $r=|R|$.

    We now partition the set $E$ of edges of $G$ into two subsets. For each edge $e=\{u,v\}\in E$ with $u\prec v$, consider the (six) paths from the center of $S_e$ to a leaf of $S_e$, and going through the vertices $u^-, u^+, v^-$, and $v^+$. 
    We now denote by $E_1\subseteq E$ the set of edges $e\in E$ such that no edge of any of these six paths is included in $R$, and let $E_2=E\setminus E_1$.
    
    From Lemma~\ref{lem:gadget}, each star gadget $S_e$ requires 15 flips if $e\in E_1$, since in that case any flip involving the center of the star would create a directed cycle with one of the paths associated with $e$, and thus the edges of size $2$ in $S_e$ can never be flipped in this case. On the other hand, if $e\in E_2$, we know that the star gadget requires at least 12 flips.
    
    Therefore the total length of the flip sequence satisfies
    \[
    \ell \geq 15|E_1| + 12|E_2| + 2r.
    \]
    We now observe that 
    \begin{equation}
    \label{eq:xbound}
    x\leq r + |E_1|.        
    \end{equation}
    Indeed, define a set $W\subseteq V$ of vertices as follows: We put $w \in W$ if and only if one of the edges that connect a center of a star gadget to $w^-$ is in $R$, or if $w^+$ is incident to an edge in $R$. It is straightfoward to check that for every edge $e \in E_2$, at least one of its endpoints is in $W$. Pause to note that $|W|\le |R|=r$. Thus, adding to $W$ for each edge $e\in E_1$ one of its endpoints, we obtain a vertex cover of $G$, which is of size at most $r+|E_1|$.
    
    It remains to bound $\ell$ as follows:
    \begin{eqnarray*}
        \ell & \geq & 15|E_1| + 12|E_2| + 2r \\
             & = & 15|E_1| + 12(m-|E_1|) + 2r \\
             & \ge & 2|E_1| + 12m + 2r \\
             & \geq & 2(x-r) + 12m + 2r \text{\ (from\ Eq. }\eqref{eq:xbound}\text{)} \\
             & = & 2x + 12m.
    \end{eqnarray*}    
This concludes the proof of the claim.\qed
\end{proof}

To finish the proof, we need to show that if there exists a $(1+\varepsilon)$-approximation of the flip distance between $h_1$ and $h_2$, then we can obtain a $(1+\delta)$-approximation of the size of a minimum vertex-cover of $G$, where $\delta=\delta(\varepsilon)$ tends to $0$ with $\varepsilon$. Let $d$ form a $(1+\varepsilon)$-approximation of the flip distance. Note that by the definition of a vertex-cover, it holds that $m=|E|\le \Delta x$. We have
\[
2x + 12m \leq d\leq (1+\varepsilon)(2x + 12m).
\]
Hence
\begin{eqnarray*}
    x & \leq & \frac 12 (d-12m) \\
    & \leq & \frac 12 ((1+\varepsilon)(2x + 12m) - 12m) \\
    & = & (1+\varepsilon) x + 6\varepsilon m \\
    & \leq & (1+\varepsilon + 6\varepsilon\Delta) x,
\end{eqnarray*}
so $\frac{d-12m}{2}$ forms a $(1+\delta)$-approximation for $x$ where $\delta(\varepsilon)=(1+6\Delta)\varepsilon=O(\varepsilon)$.
\qed\end{proof}

\section{Proof of Theorem~\ref{thm:approx}}
\label{sec:approx}

We now prove Theorem~\ref{thm:approx}. At the heart of our proof is the following key lemma, which we show first. 

\begin{lemma}\label{lemma:flippable}
    Let $H=(V,\mathcal{E})$ be a hypergraph and let $h_1, h_2$ be distinct acyclic orientations of $H$. Then there exists $e\in \mathcal{E}$ such that $h_1(e) \neq h_2(e)$ and $(h_1(e),h_2(e))$ is flippable in $h_1$. 
\end{lemma}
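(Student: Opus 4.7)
The plan is to choose a specific disagreement hyperedge $e^*$ by using topological orderings of $D_{h_1}$ and $D_{h_2}$ simultaneously, and then argue by contradiction that the flip at $(h_1(e^*), h_2(e^*))$ preserves acyclicity. Concretely, I fix a topological ordering $\sigma_2$ of $D_{h_2}$, so that $h_2(e)$ is always the $\sigma_2$-maximum element of $e$, and I let $v^*$ be the $\sigma_2$-maximum vertex that arises as $h_2(e)$ for some disagreement edge $e$. I then fix a topological ordering $\sigma_1$ of $D_{h_1}$ and, among the disagreement edges $e$ with $h_2(e)=v^*$, I pick $e^*$ minimizing $\sigma_1(h_1(e^*))$. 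Setting $u^*:=h_1(e^*)$, the candidate flip is $(u^*,v^*)$.

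The first key observation is that every hyperedge $e'$ affected by the flip (i.e., $u^*,v^*\in e'$ and $h_1(e')=u^*$) satisfies $h_2(e')=v^*$. Indeed, $h_2(e')$ is the $\sigma_2$-maximum of $e'$ and so lies at or above $\sigma_2(v^*)$; a strict inequality would produce a disagreement edge $e'$ whose $h_2$-head violates the maximality of $v^*$. A useful consequence is that every new arc created by the flip (which necessarily ends at $v^*$) is also an arc of $D_{h_2}$.

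To show the flipped orientation $h'$ is acyclic, I assume for contradiction that $D_{h'}$ contains a cycle. Any such cycle must use a new arc, so it has the form $v^*\to y_1\to\cdots\to y_k\to v^*$ where $y_k\to v^*$ is new (witnessed by some affected $e'\ni u^*,v^*,y_k$) and the preceding arcs lie in $D_{h_1}$ (arcs ending at vertices other than $v^*$ are preserved). Walking along the path, I show inductively that the hyperedge $\tilde e_i$ witnessing each arc $y_{i-1}\to y_i$ must be an agreement edge. For $i\ge 2$, the inductive hypothesis gives $\sigma_2(y_{i-1})>\sigma_2(v^*)$; if $\tilde e_i$ were a disagreement edge, then $\sigma_2(h_2(\tilde e_i))\ge \sigma_2(y_{i-1})>\sigma_2(v^*)$ would contradict the maximality of $v^*$. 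The base case $i=1$ is the delicate step: since $v^*\in\tilde e_1$, if $\tilde e_1$ is a disagreement edge the same maximality forces $h_2(\tilde e_1)=v^*$, so the tie-breaking rule gives $\sigma_1(u^*)\le\sigma_1(y_1)$; but the path $v^*\to y_1\to\cdots\to y_k$ in $D_{h_1}$, combined with $y_k$ lying in the affected hyperedge $e'$ with $h_1$-head $u^*$, forces $\sigma_1(y_1)<\sigma_1(u^*)$, a contradiction.

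Once the induction is complete, the whole path $v^*\to y_1\to\cdots\to y_k$ lies in $D_{h_2}$, and combined with the arc $y_k\to v^*$ (also in $D_{h_2}$ by the first observation) this yields a directed cycle in $D_{h_2}$, contradicting acyclicity of $h_2$. The main obstacle is precisely the base case $i=1$: because $\sigma_2(y_0)=\sigma_2(v^*)$, a purely $\sigma_2$-based argument cannot rule out that $\tilde e_1$ is a disagreement edge with $h_2$-head $v^*$, and this is exactly where the tie-breaking via $\sigma_1$ becomes essential.
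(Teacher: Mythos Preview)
Your argument is correct and takes a genuinely different route from the paper. One small point deserves a sentence: in the base case you assert that the path $v^*\to y_1\to\cdots\to y_k$ in $D_{h_1}$ together with $y_k\in e'$ and $h_1(e')=u^*$ forces the \emph{strict} inequality $\sigma_1(y_1)<\sigma_1(u^*)$. A priori this only yields $\sigma_1(y_1)\le\sigma_1(u^*)$, with equality exactly when $k=1$ and $y_1=u^*$. But that case cannot occur, since any hyperedge $f$ with $v^*\in f$ and $h_1(f)=u^*$ is affected by the flip and hence has $h'(f)=v^*$, so $(v^*,u^*)$ is not an arc of $D_{h'}$ and thus $y_1\neq u^*$. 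With this one-line addition your proof is complete.

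The paper's proof proceeds quite differently. It argues by global contradiction: assuming that for \emph{every} disagreement edge $e$ the flip at $(h_1(e),h_2(e))$ creates a cycle, it extracts from each such $e$ a path of length at least two in the transitive reduction of $D_{h_1}$ from $h_2(e)$ to $h_1(e)$, a ``next'' disagreement edge $e'$ sitting on that path, and a directed walk in $D_{h_2}$ from $h_2(e)$ to $h_2(e')$. Iterating produces an infinite sequence $(e_i)$; acyclicity of $D_{h_2}$ forces the values $h_2(e_i)$ to eventually stabilize, and a final argument exploiting the transitive reduction of $D_{h_1}$ yields the contradiction. Your approach is more direct and arguably cleaner: the topological order $\sigma_2$ singles out a specific extremal candidate head $v^*$, and the secondary tie-break via $\sigma_1$ is precisely what handles the one delicate situation where the first hop out of $v^*$ is witnessed by another disagreement edge with the same $h_2$-head. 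What the paper's iterative argument buys is that it never commits to a particular candidate; what your argument buys is an explicit choice of $e^*$ and a short local verification that the resulting flip is safe.
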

\begin{proof}
    Let $\mathcal{E}':=\{e \in \mathcal{E}|h_1(e)\neq h_2(e)\}$.
    Towards a contradiction, suppose that for every edge $e \in \mathcal{E}'$, the pair $(h_1(e),h_2(e))$ is not flippable in $h_1$. By definition, this means that for every $e \in \mathcal{E}'$ there exists a directed cycle in the digraph $D_{h^e}$, where $h^e:\mathcal{E}\rightarrow V$ denotes the orientation of $H$ defined by $$h^e(f):=\begin{cases}
      h_2(e), & \text{ if } h_1(f)= h_1(e) \text{ and } \{h_1(e),h_2(e)\} \subseteq f \\ h_1(f), & \text{ otherwise}  
    \end{cases}$$
    for every $f \in \mathcal{E}$.
    In the following, we denote by $D\subseteq D_{h_1}$ the \emph{transitive reduction} of $D_{h_1}$.
    Our first claim below is illustrated on Figure~\ref{fig:claim}.


\begin{figure}
    \centering
    \includegraphics[page=3, scale=.8]{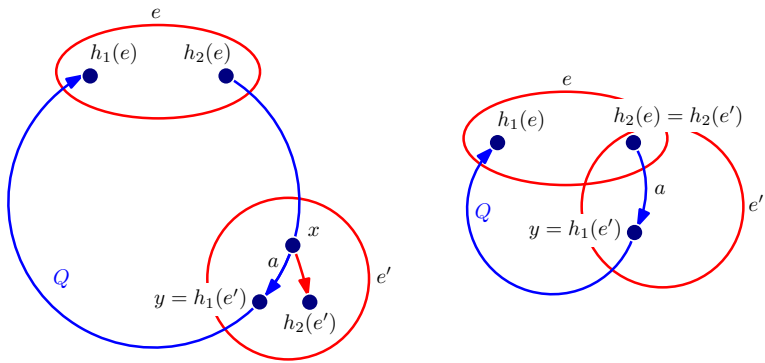}
    \caption{Illustration of the first claim in the proof of Lemma~\ref{lemma:flippable}. The figure on the right illustrates the case where $h_2(e)=h_2(e')$ and $a$ is the first arc of $Q$.}
    \label{fig:claim}
\end{figure}

\medskip

\begin{claim} 
For every $e \in \mathcal{E}'$, there exists some $e' \in \mathcal{E}'$ and a directed path $Q$ in $D$ starting at $h_2(e)$ and ending at $h_1(e)$, such that the following statements hold:
\begin{enumerate}
\item $Q$ has length at least two. 
\item There exists an arc $a=(x,y)$ on $Q$, such that $x,y \in e'$ and $h_1(e')=y$.   
\item The path $W$ composed of the prefix of $Q$ from $h_2(e)$ to $x$ followed by the arc $(x,h_2(e'))$ (if $x\not=h_2(e')$) is a directed path in $D_{h_2}$. It is of length at least one unless $h_2(e)=h_2(e')$ and $a$ is the first arc of $Q$.
\end{enumerate}
\end{claim}
\begin{proof}
Let $C$ be a shortest directed cycle in $D_{h^e}$. Then $C$ must use at least one arc of the form $(z,h_2(e))$
for some vertex $z\in f$ and $f\in \mathcal{E}$ with $\{h_1(e),h_2(e)\} \subseteq f$, $h_1(f)=h_1(e)$ and $h^e(f)=h_2(e)$.
Also, note that if $z=h_1(e)$, then $C$ must have length at least three, since $(h_2(e),h_1(e))$ cannot be an arc of $D_{h^e}$.
Observe that the directed path $P:=C-(z,h_2(e))$ in $D_{h^e}$ also forms a directed path from $h_2(e)$ to $z$ in the digraph $D_{h_1}$.
Now consider the path $P'$ in $D_{h_1}$, defined as $P':=P$ if $z=h_1(e)$, and as $P':=P+(z,h_1(e))$ if $z \neq h_1(e)$. Then $P'$ is a directed path in $D_{h_1}$ from $h_2(e)$ to $h_1(e)$. 
Since $C$ has length at least three if $z=h_1(e)$, $P'$ must have length at least two.

This proves that there is a directed path of length at least two in $D_{h_1}$ from $h_2(e)$ to $h_1(e)$. Let us now define $Q$ as a longest directed path in $D_{h_1}$ from $h_2(e)$ to $h_1(e)$. Then $Q$ also has length at least two, and the first item of the claim holds. Furthermore, the maximality of $Q$ implies that $Q$ is fully contained in the transitive reduction $D$ of $D_{h_1}$. 

Let $h_2(e)=q_0,q_1,\ldots,q_\ell=h_1(e)$ be the sequence of vertices of $Q$ in order. Then by definition of $D_{h_1}$, there exist edges $f_1,\ldots,f_\ell \in \mathcal{E}$ such that $q_{i-1}\in f_i$ and $h_1(f_i)=q_i$, for every $i=1,\ldots,\ell$. 
Suppose first that $h_1(f_i)=h_2(f_i)$ for every $i=1,\ldots,\ell$. Then $(q_{i-1},q_i)\in A(D_{h_2})$ for every $i=1,\ldots,\ell$, and clearly we also have $(h_1(e),h_2(e))=(q_\ell,q_0) \in A(D_{h_2})$. This is a contradiction to $h_2$ being an acyclic orientation of $H$. 
Consequently, at least one of $f_1, \ldots,f_\ell$ must be contained in $\mathcal{E}'$. Let $j$ be the smallest index such that $f_j \in \mathcal{E}'$. We now define $e':=f_j$ and $a=(x,y):=(q_{j-1},q_j)$. It follows directly from these definitions that the second item of our claim holds. 

Thus, only the third item remains to be verified, which states that 
$W$, defined as the prefix of $Q$ up to $x$, together with the arc $(x,h_2(e'))$ if $x\not= h_2(e')$, is a directed path in $D_{h_2}$.
To see this, note that by minimality of the choice of $j$, we have that $h_2(f_i)=h_1(f_i)=q_i$ and thus $(q_{i-1},q_i)\in A(D_{h_2})$ for every $0\le i\le j-1$. Furthermore, since $q_{j-1}\in e'=f_j$, we have $(q_{j-1},h_2(e'))=(q_{j-1},h_2(f_j)) \in A(D_{h_2})$ or $q_{j-1}=h_2(e')$. Thus the sequence $W$ 
is a directed walk from $h_2(e)=q_0$ to $h_2(e')$ in $D_{h_2}$. It is easy to see that $W$ is of positive length unless $h_2(e)=h_2(e')$ and $j=1$ (hence $a$ is the first arc on $Q$). This concludes the proof of the claim.
\qed\end{proof}

We next define an infinite sequence $(e_i)_{i=1}^\infty$ of hyperedges $e_i \in \mathcal{E}'$ of $H$, as follows. First we pick an element $e_1 \in \mathcal{E}'$ arbitrarily (note that $\mathcal{E}' \neq \emptyset$ as $h_1\neq h_2$). Then, for every $i \ge 1$, having defined $e_i$, we define $e_{i+1} \in \mathcal{E}'$ as the hyperedge $e'$ given by the previous claim, applied with $e_i$ in place of $e$. Concretely, we pick $e_{i+1}$ as an element of $\mathcal{E}'$ such that there exists
\begin{itemize}
\item a directed path $Q_i$ of length at least two in $D$ from $h_2(e_i)$ to $h_1(e_i)$,
\item an arc $a_i=(x_i,y_i)$ on $Q_i$ such that $x_i \in e_{i+1}$ and $h_1(e_{i+1})=y_i$,
\item and a directed walk $W_i$ in $D_{h_2}$ from $h_2(e_i)$ to $h_2(e_{i+1})$ which is of length at least $1$, unless $h_2(e_{i})=h_2(e_{i+1})$ and $a_i$ is the first arc of $Q_i$.
\end{itemize}

\begin{claim} 
There exists $i_0 \in \mathbb{N}$ such that for every $i \ge i_0$ we have that $h_2(e_i)=h_2(e_{i_0})$ and $a_i$ is the first arc of $Q_i$.
\end{claim}
\begin{proof}
Towards a contradiction, suppose there exist infinitely many distinct indices $i_1<i_2<i_3<\cdots$ such that for every $k \ge 1$, we either have $h_2(e_{i_k})\neq h_2(e_{i_k+1})$ or $a_{i_k}$ is not the first arc on $Q_{i_k}$. By the third item above, this means that $W_{i_k}$ is of length at least $1$ for every $k \ge 1$. Since $H$ is finite, there must exist distinct values $r<s$ such that $h_2(e_{i_{r}})=h_2(e_{i_{s}})$. But then the concatenation of the directed walks $W_{i_{r}},W_{i_{r}+1},\ldots,W_{i_s-1}$ forms a closed directed walk in $D_{h_2}$ which is of length at least $1$. This is a contradiction to the fact that $D_{h_2}$ is acyclic, and we thereby conclude the proof of our claim.\qed
\end{proof}

In the following, let us denote $v:=h_2(e_{i_0})$. By the previous claim, we have that $h_2(e_{i})=v$ and $a_i=(x_i,y_i)$ is the first arc on $Q_i$ for every $i \ge i_0$. In particular, $(h_2(e_{i_0+1}),h_1(e_{i_0+1}))=(v,y_{i_0})=(x_{i_0},y_{i_0})=a_{i_0}\in A(Q_{i_0})\subseteq A(D)$ is contained in $D$. However, we also have that $Q_{i_0+1}$ is a directed path in $D\subseteq D_{h_1}$ starting in $h_2(e_{i_0+1})$ and ending in $h_1(e_{i_0+1})$. Since $Q_{i_0+1}$ has length at least two by the first item above, it is a directed path also in $D_{h_1}-(h_2(e_{i_0+1}),h_1(e_{i_0+1}))$ parallel to the arc $(h_2(e_{i_0+1}),h_1(e_{i_0+1}))$. This is a contradiction, since by the above the arc $(h_2(e_{i_0+1}),h_1(e_{i_0+1}))$ is contained in the transitive reduction $D$ of $D_{h_2}$. This final contradiction shows that our initial assumption was incorrect, concluding the proof of the lemma by the principle of contradiction.
\qed\end{proof}

Using Lemma~\ref{lemma:flippable}, we are now ready to give the proof of Theorem~\ref{thm:approx}.
\begin{proof}[of Theorem~\ref{thm:approx}]
We describe an algorithm that takes as input a hypergraph $H=(V,\mathcal{E})$ and two distinct acyclic orientations $h_1, h_2$ of $H$, and returns a flip sequence $\{u_1,v_1\},\ldots,\{u_\ell,v_\ell\}$ transforming $h_1$ into $h_2$, as follows.

The algorithm starts by going through the list of hyperedges $e$ in $\mathcal{E}$ and checks whether $h_1(e)\neq h_2(e)$ and whether the pair $(h_1(e),h_2(e))$ is flippable in $h_1$. The latter can be done by checking whether the digraph $D_{h^e}$ defined as in the proof of Lemma~\ref{lemma:flippable} is acyclic (which can be done in polynomial time). Once the algorithm has found an edge $e 
 \in \mathcal{E}$ with $h_1(e) \neq h_2(e)$ such that $(h_1(e),h_2(e))$ is flippable in $h_1$ (such an edge $e$ exists by Lemma~\ref{lemma:flippable}), it proceeds to define $\{u_1,v_1\}:=\{h_1(e),h_2(e)\}$. If $h^e=h_2$, then the algorithm returns the flip sequence $\{u_1,v_1\}$ of length $1$ from $h_1$ to $h^e=h_2$. Otherwise, if $h^e\neq h_2$, it performs a recursive call on the input pair $h^e, h_2$ of distinct acyclic orientations, which gives back a flip sequence $\{u_1',v_1'\},\ldots,\{u_{\ell'}',v_{\ell'}'\}$ from $h^e$ to $h_2$. The algorithm now returns the concatenated flip sequence $\{u_1,v_1\},\{u_1',v_1'\},\ldots,\{u_{\ell'}',v_{\ell'}'\}$ from $h_1$ to~$h_2$. 

Note that with every recursive call of the above algorithm, the number of hyperedges $f$ for which the two considered orientations of $H$ assign different heads strictly decreases.
 
 Hence, the above algorithm returns in polynomial time a flip sequence of length at most $|\{e \in \mathcal{E}|h_1(e)\neq h_2(e)\}|$ from $h_1$ to $h_2$.
 
 We next show that
 $$|\{e \in \mathcal{E}|h_1(e)\neq h_2(e)\}| \le \Delta_2(H)\cdot \text{dist}(h_1, h_2),$$ where $\text{dist}(h_1, h_2)$ denotes the length of a shortest flip sequence transforming $h_1$ into $h_2$. To see this, note that if we flip at a pair $\{u,v\}$ of vertices in an orientation $h$ of $H$, by definition of a flip this can only change the value of $h(e)$ for those $e \in E$ with $\{u,v\}\subseteq e$. However, by definition of the codegree there are at most $\Delta_2(H)$ such hyperedges $e$ for every fixed pair $\{u,v\}$. Consequently, in an optimal flip sequence of length $\text{dist}(h_1,h_2)$ transforming $h_1$ to $h_2$, every single flip changes the heads of at most $\Delta_2(H)$ hyperedges, and in total at least $|\{e \in \mathcal{E}|h_1(e)\neq h_2(e)\}|$ head-assignments have to be changed. This shows that $|\{e \in \mathcal{E}|h_1(e)\neq h_2(e)\}| \le \Delta_2(H)\cdot\text{dist}(h_1,h_2),$ as desired. It follows that the above polynomial-time algorithm indeed returns a $\Delta_2(H)$-approximation of a shortest flip sequence from $h_1$ to $h_2$ in $H$. This concludes the proof.
\qed\end{proof}

\bibliographystyle{plain}
\bibliography{references}

\end{document}